\documentclass[a4paper,onecolumn,11pt,accepted=2019-11-22]{quantumarticle}

\pdfoutput=1

\usepackage[usenames,dvipsnames]{xcolor}
\usepackage{amsfonts}
\usepackage{amsmath,amsthm,amssymb,dsfont}
\usepackage{enumerate}
\usepackage{graphicx}	
\usepackage[margin=3.5cm]{geometry}
\usepackage{url}
\usepackage{bbm}

\usepackage{hyperref}
\hypersetup{colorlinks=true,citecolor=blue,linkcolor=blue,filecolor=blue,urlcolor=blue,breaklinks=true}

\usepackage{nicefrac}

\theoremstyle{plain}
\newtheorem{theorem}{Theorem}[section]
\newtheorem{lemma}[theorem]{Lemma}

\newtheorem{corollary}[theorem]{Corollary}
\newtheorem{proposition}[theorem]{Proposition}
\newtheorem{conjecture}[theorem]{Conjecture}

\theoremstyle{definition}

\newtheorem{remark}[theorem]{Remark}

\newcommand*{\cB}{\mathcal{B}}

\newcommand*{\cE}{\mathcal{E}}

\newcommand*{\N}{\mathbb{N}}

\newcommand*{\St}[1]{\mathrm{S}(\mathbb{C}^{#1})}

\newcommand*{\eps}{\varepsilon}

\newcommand*{\id}{\mathrm{id}}

\newcommand*{\tr}{\mathrm{tr}}


\newcommand{\norm}[1]{\left\lVert#1\right\rVert}

\def\Tr{\operatorname{tr}}

\def\norm#1{\left|\!\left|{#1}\right|\!\right|}
\def\id{\mathsf{id}}

\def\id{\mathds{1}}

\def\>{\rangle}
\def\<{\langle}

\allowdisplaybreaks    

\begin{document}

\title{An information-theoretic treatment of quantum dichotomies}

\author{Francesco Buscemi}
\affiliation{Graduate School of Informatics, Nagoya University, Nagoya, Japan}
\orcid{0000-0001-9741-0628}
\author{David Sutter}
\affiliation{Institute for Theoretical Physics, ETH Zurich, Switzerland}
\orcid{0000-0001-9779-8888}
\author{Marco Tomamichel}
\affiliation{Centre for Quantum Software and Information and School of Computer Science, \newline University of Technology Sydney, Sydney}
\email{marco.tomamichel@uts.edu.au}
\orcid{0000-0001-5410-3329}

\date{}
\maketitle

\begin{abstract}
Given two pairs of quantum states, we want to decide if there exists a quantum channel that transforms one pair into the other. The theory of quantum statistical comparison and quantum relative majorization provides necessary and sufficient conditions for such a transformation to exist, but such conditions are typically difficult to check in practice. Here, by building upon work by Keiji Matsumoto, we relax the problem by allowing for small errors in one of the transformations. In this way, a simple sufficient condition can be formulated in terms of one-shot relative entropies of the two pairs. In the asymptotic setting where we consider sequences of state pairs, under some mild convergence conditions, this implies that the quantum relative entropy is the only relevant quantity deciding when a pairwise state transformation is possible. More precisely, if the relative entropy of the initial state pair is strictly larger compared to the relative entropy of the target state pair, then a transformation with exponentially vanishing error is possible. On the other hand, if the relative entropy of the target state is strictly larger, then any such transformation will have an error converging exponentially to one. 
As an immediate consequence, we show that the rate at which pairs of states can be transformed into each other is given by the ratio of their relative entropies. We discuss applications to the resource theories of athermality and coherence, where our results imply an exponential strong converse for general state interconversion. 
\end{abstract}

\section{Introduction}

Various pre- and partial orders have been the subject of extensive study both in mathematical statistics~\cite{inequalities34,blackwell1953,lecam1964,Torgersen1970,alberti1982stochasticity,torgersen1991comparison,DAHL199953} and in information theory~\cite{shannon1958note,korner-marton-75,cohen1998comparisons}. An example of paramount importance is that provided by the majorization preorder~\cite{marshall11}: a probability distribution $\vec{p}_1$ is said to majorize another distribution $\vec{p}_2$, in formula $\vec{p}_1\succeq \vec{p}_2$, whenever there exists a bistochastic\footnote{A transformation is said to be bistochastic if it transforms probability distributions to probability distributions, while keeping the uniform distribution fixed.} transformation $T$ such that $T\vec{p}_1=\vec{p}_2$. The majorization preorder is particularly relevant and useful because of a famous result by Hardy, Littlewood, and P\'olya, according to which the relation $\vec{p}_1\succeq\vec{p}_2$ can be expressed in terms of a finite set of inequalities~\cite{inequalities34} of the form $f_i(\vec{p}_1)\ge f_i(\vec{p}_2)$, intuitively capturing the idea that $\vec{p}_1$ is ``less uniform'' than $\vec{p}_2$. Such inequalities can be conveniently visualized by plotting the Lorenz curve of $\vec{p}_1$ versus that of $\vec{p}_2$~\cite{marshall11}.

As it involves the comparison of two probability distributions relative to a third one (i.e., the uniform distribution), the majorization preorder is naturally generalized by considering two pairs of probability distributions, that is, two \textit{dichotomies} $(\vec{p}_1,\vec{q}_1)$ and $(\vec{p}_2,\vec{q}_2)$, where now $\vec{q}_1$ and $\vec{q}_2$ are arbitrary distributions. One then writes $(\vec{p}_1,\vec{q}_1)\succeq(\vec{p}_2,\vec{q}_2)$ whenever there exists a stochastic transformation simultaneously mapping $\vec{p}_1$ to $\vec{p}_2$ and $\vec{q}_1$ to $\vec{q}_2$. As a consequence of Blackwell's equivalence theorem~\cite{blackwell1953}, also the more general case of dichotomies is completely characterized by a finite set of simple inequalities, which directly reduce to those of Hardy, Littlewood and Polya if $\vec{q}_1$ and $\vec{q}_2$ are both taken to be uniform. Also in this more general scenario, a \textit{relative} Lorenz curve can be associated to each dichotomy, and the preorder $\succeq$ visualized accordingly~\cite{Renes2016}.

In relation to quantum information sciences, while the preorder of majorization has found early applications in entanglement theory~\cite{Nielsen1999}, the notion of \textit{relative} majorization have started being employed only more recently, especially due to its applications in quantum thermodynamics~\cite{Horodecki2013, wehner13, buscemi-2015, Renes2016, Buscemi2017, Gour2018} and generalized resource theories~\cite{Chitambar2019}. In the quantum setting, the objects of comparison are \textit{quantum dichotomies}, namely, pairs of density matrices.

Let us consider two arbitrary finite-dimensional quantum dichotomies $(\rho_1,\sigma_1)$ and $(\rho_2, \sigma_2)$. In complete analogy with the classical case \textit{\`a la} Blackwell, the relative majorization preorder $\succeq$ can be extended to the quantum setting by writing $(\rho_1,\sigma_1)\succeq(\rho_2, \sigma_2)$ whenever there exists a completely positive trace-preserving map $\mathcal{E}$ such that $\mathcal{E}(\rho_1)=\rho_2$ and $\mathcal{E}(\sigma_1)=\sigma_2$ simultaneously. However, in the quantum case\footnote{With the notable exception of the qubit case~\cite{Alberti1980,chefles04} and pure states~\cite{chefles00,siddhu16}.}, there is no known simple set of inequalities, analogously to the comparison of two relative Lorenz curves, able to completely capture the relative majorization preorder~\cite{Reeb2011, Jencova2012, matsumoto2014example, Buscemi2017}: statistical conditions can be derived~\cite{Buscemi2012, matsumoto2010randomization, jenvcova2016comparison} but they typically involve an infinite number of inequalities, thus becoming much more cumbersome to verify.

In order to overcome such problems, in this paper we build upon an information-theoretic approach for the comparison of quantum dichotomies first considered by Matsumoto in~\cite{matsumoto10}. This involves the relaxation of the order $\succeq$ to allow for errors in the transformation, and the consideration of an asymptotic regime, in which an increasing number of identical copies of one dichotomy get transformed into something that resembles, up to an arbitrarily high level of accuracy, many copies of the other dichotomy. More precisely, while we allow for (small) errors in the transformation $\mathcal{E}(\rho_1)\approx\rho_2$, the condition $\mathcal{E}(\sigma_1)=\sigma_2$ must always be satisfied exactly, as if it were a sort of ``conservation rule.'' We compute that the optimal rate at which such transformation can happen is given by the ratio between the quantum relative entropies $D(\rho_1\|\sigma_1)$ and $D(\rho_2\|\sigma_2)$. Our result hence shows that quantum dichotomies, while enjoying a very rich structure, are asymptotically characterized by a single number, that is, their relative entropy.

The remainder of the paper is structured as follows. After introducing the relevant one-shot divergences and their properties in Section~\ref{sec:notation}, we establish our main technical results in Section~\ref{sec:main}. First, in Section~\ref{sec:main-exact} we derive sufficient conditions in terms of one-shot divergences for exact pairwise state transformations. In Section~\ref{sec:main-approximate} we relax this to allow for an error on one of the states, and again find sufficient conditions in terms of smoothed one-shot entropies. This then allows us to derive our main results in Section~\ref{sec:main-asymptotic}, Theorems~\ref{th:main} and~\ref{th:converse}, which together show that the relative entropy fully characterizes when pairwise transformations are possible asymptotically.
Section~\ref{sec:rates} then takes an information-theoretic approach to the problem by studying the maximal rate at which independent copies of states can be transformed into each other in Theorem~\ref{th:rates}.
We finally discuss applications of our results to the resource theories of athermality and coherence in Section~\ref{sec:app} and end with a conjecture characterizing the second-order asymptotic behavior of resource transformations in Section~\ref{sec:discussion}.

\emph{Note added.} Concurrent work~\cite{wang19,sagawa19} also derives some of our results in the setting of resource theories and thermodynamics. We explore these connections in Section~\ref{sec:app} for the convenience of the reader.

\section{Preliminaries}
\label{sec:notation}

\subsection{Notation} 
Let $\St{d}$ denote the set of \emph{quantum states} on a $d$-dimensional Hilbert space $\mathbb{C}^d$. A \emph{quantum channel} $\cE: \St{d} \to \St{d'}$ is a linear map that is \emph{completely positive} and \emph{trace-preserving} (CPTP). We denote by $\leq$ the L\"owner partial order, i.e., for two Hermitian matrices $X$ and $Y$ the relation $X\geq Y$ means that $X-Y$ is positive semi-definite, and the relation $X \gg Y$ means that the support of $Y$ is contained in the support of~$X$. Throughout this paper we denote by $\log$ the logarithm to base $2$.

We denote the \emph{trace distance} between two states $\rho$ and $\sigma$ by $T(\rho,\sigma) := \frac{1}{2} \norm{\rho - \sigma}_1$, where $\| \cdot \|_1$ denotes the Schatten 1-norm. The \emph{fidelity} is given as $F(\rho,\sigma) := \| \sqrt{\rho}\sqrt{\sigma}\|_1^2$. We will also use the \emph{sine} or \emph{purified distance}, $P(\rho,\sigma) := \sqrt{1-F(\rho,\sigma)}$. Both $P$ and $T$ satisfy the triangle inequality and are non-increasing when a CPTP map is applied to both states. They are related by the Fuchs-van de Graaf inequalities stating that $T(\rho,\sigma) \leq P(\rho,\sigma) \leq \sqrt{2T(\rho,\sigma) - T(\rho,\sigma)^2}$. When the choice of metric is arbitrary, we use $\Delta$ to stand for either $T$ or $P$.

\subsection{Some divergences and their properties}

In this work we use several different non-commutative divergences. We will introduce here only the measures and properties that are needed for this work\,---\,an interested reader may consult~\cite{mybook} for a more comprehensive discussion with references to all the original papers.
For $\rho, \sigma \in \St{d}$ the relative entropy is given by $D(\rho \| \sigma):= \tr\, \rho(\log \rho - \log \sigma)$ if $\sigma \gg \rho$, and $+\infty$ otherwise. To simplify the exposition in the following we assume throughout that the states $\sigma$ always have full support and are thus invertible, avoiding such infinities.

For $\alpha \in (0,1) \cup (1,2]$ the \emph{Petz quantum R\'enyi divergence}~\cite{petz86} is defined as
\begin{align}
\bar D_{\alpha}(\rho \| \sigma):=
\frac{1}{\alpha -1 } \log \tr\, \rho^\alpha \sigma^{1-\alpha} \,.
\end{align}
In the limit $\alpha \to 0$ the divergence converges to the \emph{min-relative entropy}~\cite{renner05,datta08}, i.e.,
\begin{align}
   D_{\min}(\rho\|\sigma) &:= -\log \tr \,\sigma \Pi_{\rho} = \lim_{\alpha \to 0} \bar D_{\alpha}(\rho \| \sigma)  \, ,
\end{align}
where $\Pi_{\rho}$ is the projector onto the support of $\rho$.

Another non-commutative family of R\'enyi divergences is the \emph{sandwiched quantum R\'enyi divergence}~\cite{lennert13,wilde13}, for $\alpha \in [\frac12,1) \cup (1,\infty)$ defined as
\begin{align}
\tilde D_{\alpha}(\rho \| \sigma):= \frac{1}{\alpha -1 } \log \tr \big( \sigma^{\frac{1-\alpha}{2 \alpha}} \rho \sigma^{\frac{1-\alpha}{2 \alpha}} \big)^{\alpha}  \, .
\end{align}
For $\alpha = \frac{1}{2}$ the sandwiched quantum R\'enyi divergence becomes
$\tilde D_{\nicefrac{1}{2}}(\rho \| \sigma) = - \log F(\rho,\sigma)$.
In the limit $\alpha \to \infty$ the sandwiched quantum R\'enyi divergence converges to the so-called \emph{max-divergence}~\cite{renner05,datta08}, i.e.,  
\begin{align}
 D_{\max}(\rho\|\sigma) &:= \inf \big\{ \lambda \in \mathbb{R} : \rho \leq 2^{\lambda} \sigma \big\} \, .
\end{align}
Both non-commutative families of R\'enyi divergences introduced above satisfy many desirable properties: they are monotonically increasing in $\alpha$, they satisfiy the data-processing inequality (i.e.\ they are non-increasing when the same CPTP map is applied to both states), and in the limit $\alpha \to 1$ they both converge to the relative entropy. 

Smooth variants of the max and min-divergence will be useful to treat problems with finite errors. The \emph{$\eps$-smooth max-divergence} is defined as
\begin{align}
D^{\eps,\Delta}_{\max}(\rho\|\sigma):= \inf_{\tilde \rho \in \cB_{\eps}^{\Delta}(\rho)} D_{\max}(\tilde \rho\|\sigma) \, ,
\end{align}
where $\cB_{\eps}^{\Delta}(\rho):= \{\tilde \rho \in \St{d}: \Delta(\rho,\tilde \rho) \leq \eps \}$ for $\eps \in (0,1)$.  We will use this definition for both trace distance and purified distance, but note that in contrast to some other works the optimization here only goes over close quantum states, not sub-normalized states.
As the smooth variant of $D_{\min}$ we introduce the so-called \emph{hypothesis testing divergence}.\footnote{In some earlier works (see, e.g.,~\cite{datta08}) an optimization over a ball of $\eps$-close states is used instead to define a smooth variant of $D_{\min}$, in analogy with $D_{\max}^{\eps}$ defined above. However, the quantity we use here, first introduced in~\cite{datta09} as a generalization of $D_{\min}$, is more natural for our problem here.}
For any $\eps \in (0, 1)$ it is defined as
\begin{align} \label{eq_Dh}
  D_h^{\eps}(\rho\|\sigma) :=  -\log \inf \big\{ \tr\, \sigma Q : 0 \leq Q \leq \id \ \land \ \tr\, \rho \, Q \geq 1 - \eps \big\} \,.
\end{align} 
In the limit $\eps \to 0$ we recover $D_{\min}(\rho\|\sigma)$. Finally, we note that both of these divergences satisfy the data-processing inequality, namely
\begin{align}
	D^{\eps,\Delta}_{\max}(\rho\|\sigma) \geq D^{\eps,\Delta}_{\max}( \cE(\rho) \| \cE(\sigma)) 
	\qquad \textrm{and} \qquad
	D^{\eps}_{h}(\rho\|\sigma) \geq D^{\eps}_{h}( \cE(\rho) \| \cE(\sigma)) 
\end{align}
for any CPTP map $\cE$.

The smooth max-divergence and the hypothesis testing divergence are closely related, as shown in~\cite[Theorem~4]{anshu19}.
\begin{proposition}
	\label{prop:DhDmax}
	Let $\rho, \sigma \in \St{d}$ and $\eps \in (0,1)$ and $\nu \in (0,1-\eps)$. It holds that
\begin{align}
 	D_h^{1-\eps}(\rho\|\sigma)  \geq D_{\max}^{\sqrt{\eps},P}(\rho\|\sigma) - \log\frac{1}{1-\eps}  \geq D_h^{1-\eps-\nu}(\rho\|\sigma) - \log\frac{4}{\nu^2}  \,.
\end{align}
\end{proposition}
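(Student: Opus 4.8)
Since Proposition~\ref{prop:DhDmax} is quoted from \cite[Theorem~4]{anshu19}, one option is to simply cite that reference; the plan below is a self-contained route. I would establish the two inequalities separately, both by passing between the SDP optimizers of $D_h$ and of $D_{\max}^{\eps,P}$.

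\emph{The right inequality} $D_{\max}^{\sqrt\eps,P}(\rho\|\sigma)-\log\tfrac1{1-\eps}\ge D_h^{1-\eps-\nu}(\rho\|\sigma)-\log\tfrac4{\nu^2}$ I would argue in the contrapositive. Fix any $\tilde\rho\in\St d$ with $P(\rho,\tilde\rho)\le\sqrt\eps$ (so $F(\rho,\tilde\rho)\ge 1-\eps$) and $\tilde\rho\le 2^\lambda\sigma$; it suffices to show that every test $0\le Q\le\id$ with $\tr[\rho Q]\ge\eps+\nu$ obeys $\tr[\sigma Q]\ge 2^{-\lambda}\tfrac{\nu^2}{4(1-\eps)}$, because then $2^{-D_h^{1-\eps-\nu}(\rho\|\sigma)}=\inf\{\tr[\sigma Q]:\tr[\rho Q]\ge\eps+\nu\}\ge 2^{-\lambda}\tfrac{\nu^2}{4(1-\eps)}$, which rearranges to $D_h^{1-\eps-\nu}(\rho\|\sigma)\le\lambda+\log\tfrac4{\nu^2}+\log(1-\eps)$, i.e.\ the claim after taking the infimum over $\tilde\rho$. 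Since $\tilde\rho\le 2^\lambda\sigma$ gives $\tr[\sigma Q]\ge 2^{-\lambda}\tr[\tilde\rho Q]$, it is enough to lower bound $b:=\tr[\tilde\rho Q]$ in terms of $a:=\tr[\rho Q]$. Applying the data-processing inequality for the fidelity to the binary measurement $\{Q,\id-Q\}$ gives $1-\eps\le F(\rho,\tilde\rho)\le\bigl(\sqrt{ab}+\sqrt{(1-a)(1-b)}\bigr)^2$; expanding the square rewrites this as $\bigl(\sqrt{a(1-b)}-\sqrt{b(1-a)}\bigr)^2\le\eps$, hence $\sqrt{a(1-b)}\le\sqrt{b(1-a)}+\sqrt\eps$. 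Squaring again and using $a\ge\eps+\nu$ (so $1-a\le 1-\eps-\nu$) yields $\nu-b\le 2\sqrt{\eps\,b\,(1-\eps-\nu)}$; solving this quadratic inequality for $b$ gives $b\ge\nu^2/\bigl(2\nu+4\eps(1-\eps-\nu)\bigr)\ge\nu^2/\bigl(4(1-\eps)\bigr)$, the last step using $\nu<1-\eps$ (a short check reduces it to $2\nu(1-2\eps)\le 4(1-\eps)^2$, which holds throughout the allowed range).

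\emph{The left inequality} $D_{\max}^{\sqrt\eps,P}(\rho\|\sigma)\le D_h^{1-\eps}(\rho\|\sigma)+\log\tfrac1{1-\eps}$ I would prove by exhibiting an explicit smoothing state. Write $\beta:=2^{-D_h^{1-\eps}(\rho\|\sigma)}$ and $L:=\sigma^{-1/2}\rho\,\sigma^{-1/2}$ (recall $\sigma$ is invertible by assumption), set $t:=1/\beta$, and put $\tilde\rho_0:=\sigma^{1/2}\min(L,t)\,\sigma^{1/2}$ (spectral truncation) and $\tilde\rho:=\tilde\rho_0/\tr[\tilde\rho_0]$. By construction $\tilde\rho_0\le t\sigma$ (the scalar bound $\min(L,t)\le t\,\id$ is preserved under conjugation by $\sigma^{1/2}$) and $0\le\tilde\rho_0\le\rho$, so $D_{\max}(\tilde\rho\|\sigma)\le\log t-\log\tr[\tilde\rho_0]$, while the elementary estimate $F(\rho,\omega/\tr[\omega])\ge\tr[\omega]$ for $0\le\omega\le\rho$ (which follows from operator monotonicity of $x\mapsto\sqrt x$) gives $F(\rho,\tilde\rho)\ge\tr[\tilde\rho_0]$. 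Hence both the desired bound on $D_{\max}^{\sqrt\eps,P}$ and the membership $\tilde\rho\in\cB_{\sqrt\eps}^P(\rho)$ follow once one shows the single inequality $\tr\bigl[(L-\tfrac1\beta\id)_+\,\sigma\bigr]\le\eps$.

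I expect this last estimate to be the main obstacle. It looks one-dimensional, because $\tr[(L-t)_+\sigma]=\E_W[(l-t)_+]$ for the probability measure $W:=\sum_j\langle e_j|\sigma|e_j\rangle\,\delta_{l_j}$ attached to $L=\sum_j l_j\,\proj{e_j}$ (which has mean $\tr\rho=1$), and for this measure a Neyman--Pearson threshold argument gives $\tr[(L-\tfrac1{\beta_W}\id)_+\sigma]\le\eps$, where $\beta_W$ is the analogous threshold value for $W$; the difficulty is that noncommutativity of $\rho$ and $\sigma$ prevents one from concluding $\beta\le\beta_W$ (tests of the form $f(L)$ do not reproduce the constraint $\tr[\rho Q]\ge\eps$, since only $\tr[\rho f(L)]\le\E_W[l\,f]$ holds for nondecreasing $f$). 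A rigorous argument therefore has to work with the structure of the optimal Neyman--Pearson test $Q^\ast$ for $D_h^{1-\eps}$ (equivalently its dual variable $\mu^\ast$, with $\beta=\mu^\ast\eps-\tr(\mu^\ast\rho-\sigma)_+$) and control how conjugation by $(\id-Q^\ast)^{1/2}$ distorts $\sigma$ relative to $\rho$ --- this is essentially \cite[Theorem~4]{anshu19}, and a minimax reformulation of the two divergences along the lines of that reference may be cleaner than forcing the explicit construction through. The remaining ingredients (the fidelity data-processing computation, the quadratic estimate, and the bookkeeping for boundary eigenspaces in the Neyman--Pearson test) are routine.
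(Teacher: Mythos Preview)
The paper does not prove this proposition; it simply cites \cite[Theorem~4]{anshu19}. Your self-contained argument for the \emph{right} inequality is complete and correct: the identity $(\sqrt{ab}+\sqrt{(1-a)(1-b)})^2+(\sqrt{a(1-b)}-\sqrt{b(1-a)})^2=1$ converts the fidelity data-processing bound into $\sqrt{a(1-b)}\le\sqrt{b(1-a)}+\sqrt\eps$, and the quadratic estimate together with the check $2\nu(1-2\eps)\le 4(1-\eps)^2$ give $b\ge\nu^2/(4(1-\eps))$ as claimed.

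For the \emph{left} inequality the gap you identify is real, and the difficulty is the choice of smoothing state rather than a missing minimax argument. The truncation $\tilde\rho_0=\sigma^{1/2}\min(L,t)\sigma^{1/2}$ forces you to control eigenprojectors of $L=\sigma^{-1/2}\rho\sigma^{-1/2}$, whereas the quantum Neyman--Pearson structure lives on eigenprojectors of $\rho-t\sigma$; these coincide only when $[\rho,\sigma]=0$, which is exactly why your classical heuristic via the measure $W$ does not transfer. The remedy is to use the latter projector: with $t=1/\beta$, set $P=\{\rho-t\sigma>0\}$, $\bar P=\id-P$, and take $\tilde\rho=\bar P\rho\bar P/\tr[\rho\bar P]$. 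Then $\bar P(\rho-t\sigma)\bar P\le 0$ gives $\bar P\rho\bar P\le t\,\bar P\sigma\bar P\le t\sigma$ directly, and Uhlmann's theorem (applied to a purification of $\rho$ acted on by $\bar P\otimes\id$) yields $F(\rho,\tilde\rho)\ge\tr[\rho\bar P]$. The single remaining estimate $\tr[\rho P]\le\eps$ now follows from the very definition of $\beta$ without any duality: if $P\ne 0$ and $\tr[\rho P]\ge\eps$ then $P$ is a feasible test, so $\tr[\sigma P]\ge\beta$; but $\tr[\rho P]-t\,\tr[\sigma P]=\tr[(\rho-t\sigma)_+]>0$, hence $\tr[\sigma P]<\beta\,\tr[\rho P]\le\beta$, a contradiction. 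Since $\tr[\rho\bar P]\ge 1-\eps$, one obtains $D_{\max}(\tilde\rho\|\sigma)\le\log t+\log\tfrac{1}{1-\eps}=D_h^{1-\eps}(\rho\|\sigma)+\log\tfrac{1}{1-\eps}$ and $P(\rho,\tilde\rho)\le\sqrt\eps$, closing the argument.
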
 

We next recall two inequalities that relate the hypothesis testing relative entropy and the max-relative entropy to the R\'enyi divergences.
\begin{proposition} \label{prop:renyi}
Let $\eps \in (0,1)$ and let $\rho, \sigma \in \St{d}$. Then,
\begin{align} \label{eq_DhRenyi}
D_{h}^\eps(\rho \| \sigma) &\geq \bar D_{\alpha}(\rho \| \sigma) - \frac{\alpha}{1-\alpha } \log \frac{1}{\eps}  && \textrm{for } \alpha \in [0,1) \, , \quad \textrm{and} \\
\label{eq_DmaxRenyi}
D_{\max}^{\eps,\Delta}(\rho \| \sigma) &\leq \tilde D_{\alpha}(\rho \| \sigma) + \frac{1}{\alpha-1} \log \frac{1}{\eps^2} + \log \frac{1}{1-\eps^2} && \textrm{for } \alpha \in (1,\infty] \, .
\end{align}
We can interchange $\bar D_{\alpha}$ and $\tilde D_{\alpha}$ in the above inequalities.
\end{proposition}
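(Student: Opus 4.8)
The plan is to prove the two inequalities separately, each by a direct choice of a near-optimal test operator or a near-optimal approximating state, and then to extract the R\'enyi divergence via an appropriate application of a weighted (reverse) H\"older inequality.

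For \eqref{eq_DhRenyi}, I would start from the definition~\eqref{eq_Dh}. Fix $\alpha \in [0,1)$. The strategy is to construct a feasible test $Q$ from the data-processing-type structure of the Petz divergence. A natural candidate is a test built from $\rho$ and $\sigma$ of the form $Q = \Pi$, a spectral projector of $\rho^{1-\alpha}\sigma^{\alpha-1}$ (or of $\sigma^{-1/2}\rho\sigma^{-1/2}$, etc.) onto the eigenvalues above a threshold $\gamma$, i.e.\ a ``Neyman--Pearson''-type test. One then shows: (i) such a $Q$ satisfies $\tr\,\rho Q \geq 1-\eps$ provided the threshold $\gamma$ is chosen so that the complementary region has $\rho$-weight at most $\eps$, and (ii) on the accepted region, $\tr\,\sigma Q$ is controlled by $\gamma^{-1}\tr\,\rho^{\alpha}\sigma^{1-\alpha}$-type quantities via a Markov/Chebyshev argument on the likelihood ratio, raised to the power that converts into the $\frac{1}{\alpha-1}\log$ form. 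Optimizing the threshold $\gamma$ and tracking the loss gives the additive penalty $\frac{\alpha}{1-\alpha}\log\frac{1}{\eps}$. An alternative, cleaner route — which I would actually pursue — is to invoke the known variational bound relating $D_h^\eps$ to $D_{\min}$ and then chain it to $\bar D_\alpha$ through monotonicity of R\'enyi divergences in $\alpha$ together with a smoothing estimate; but the Neyman--Pearson computation is the robust fallback that does not require auxiliary lemmas beyond H\"older's inequality.

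For \eqref{eq_DmaxRenyi}, I would work from the definition of $D_{\max}^{\eps,\Delta}$ as an infimum over $\tilde\rho \in \cB_\eps^\Delta(\rho)$. The task is to exhibit one good $\tilde\rho$. Fix $\alpha \in (1,\infty]$. Define $\tilde\rho$ by projecting $\rho$ (in the ``sandwiched'' basis $\sigma^{\frac{1-\alpha}{2\alpha}}\rho\,\sigma^{\frac{1-\alpha}{2\alpha}}$) onto the part where the relevant likelihood ratio is below $2^\lambda$ for $\lambda$ slightly above $\tilde D_\alpha(\rho\|\sigma)$, then renormalize. By construction $\tilde\rho \leq 2^{\lambda'}\sigma$ for a controlled $\lambda'$, so $D_{\max}(\tilde\rho\|\sigma) \leq \lambda'$; the work is to show $\Delta(\rho,\tilde\rho) \leq \eps$. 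For purified distance this amounts to lower-bounding $F(\rho,\tilde\rho)$, hence to showing the discarded weight $\tr\big(\sigma^{\frac{1-\alpha}{2\alpha}}\rho\,\sigma^{\frac{1-\alpha}{2\alpha}}\big)^{\alpha}$ restricted to the large-eigenvalue region is small — again a Markov-type bound against $2^{(\alpha-1)\tilde D_\alpha}$, which yields the $\frac{1}{\alpha-1}\log\frac{1}{\eps^2}$ term; the $\log\frac{1}{1-\eps^2}$ accounts for renormalizing the truncated state. The $\alpha=\infty$ endpoint is immediate since $D_{\max}^{\eps,\Delta} \leq D_{\max} = \tilde D_\infty$.

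The main obstacle I anticipate is the non-commutativity bookkeeping in \eqref{eq_DmaxRenyi}: the operator $\tilde\rho$ obtained by truncating in the sandwiched basis is not literally dominated by $2^{\lambda}\sigma$ unless one conjugates back carefully, and the renormalization interacts with the purified-distance estimate. I would handle this by working throughout with $A := \sigma^{\frac{1-\alpha}{2\alpha}}\rho\,\sigma^{\frac{1-\alpha}{2\alpha}}$, defining the truncation as a spectral projector $P$ of $A$, setting $\tilde\rho := c^{-1}\sigma^{\frac{\alpha-1}{2\alpha}} P A P\,\sigma^{\frac{\alpha-1}{2\alpha}}$ with $c = \tr PAP\,\cdot(\text{appropriate factor})$, and verifying the max-divergence bound and the fidelity bound as two independent spectral computations on $A$. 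The final step — swapping $\bar D_\alpha \leftrightarrow \tilde D_\alpha$ — then follows because the proofs only used that the divergence in question lies between $D_{\min}$/$F$ and the quantity being bounded, and both families share the monotonicity and limiting properties recalled in the preliminaries, so the same threshold arguments apply verbatim with the roles of $\bar D_\alpha$ and $\tilde D_\alpha$ interchanged.
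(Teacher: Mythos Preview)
The paper does not prove Proposition~\ref{prop:renyi} from scratch: it records that \eqref{eq_DhRenyi} is \cite[Proposition~3.2]{audenaert12} and \eqref{eq_DmaxRenyi} is \cite[Theorem~3]{anshu19}, and handles the interchange of $\bar D_\alpha$ and $\tilde D_\alpha$ in one line via the Araki--Lieb--Thirring inequality. Your Neyman--Pearson/Markov sketch for \eqref{eq_DhRenyi} is essentially how \cite{audenaert12} proceeds, so that part is on track. For \eqref{eq_DmaxRenyi} your truncation idea is in the spirit of the original QAEP bound~\cite{tomamichel08}, but the sharper constants stated here come from a minimax argument in~\cite{anshu19} rather than direct spectral truncation; in particular your candidate $\tilde\rho = c^{-1}\sigma^{\frac{\alpha-1}{2\alpha}}PAP\,\sigma^{\frac{\alpha-1}{2\alpha}}$ does not obviously satisfy $\tilde\rho \leq 2^{\lambda}\sigma$, since $PAP \leq t\,\id$ does not imply $PAP \leq t'\sigma^{1/\alpha}$ when $P$ and $\sigma$ fail to commute --- the obstacle you flag is real and is not resolved by the outline you give.

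The clearer gap is your last paragraph. The interchange of $\bar D_\alpha$ and $\tilde D_\alpha$ requires no rerun of the threshold arguments: since $\bar D_\alpha(\rho\|\sigma) \geq \tilde D_\alpha(\rho\|\sigma)$ for all states by Araki--Lieb--Thirring, replacing $\bar D_\alpha$ by $\tilde D_\alpha$ in \eqref{eq_DhRenyi} only weakens the lower bound, and replacing $\tilde D_\alpha$ by $\bar D_\alpha$ in \eqref{eq_DmaxRenyi} only weakens the upper bound. Both swapped inequalities are thus immediate corollaries, which is exactly what the paper observes. Your claim that the spectral constructions ``apply verbatim with the roles interchanged'' is both unnecessary and dubious --- the truncation you propose for \eqref{eq_DmaxRenyi} is tied to the sandwiched operator $A$, and there is no analogous object for $\bar D_\alpha$ that would make the same construction go through.
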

Note that we can interchange the R\'enyi relative entropies in both inequalities since $\bar D_{\alpha}(\rho\|\sigma) \geq \tilde D_{\alpha}(\rho\|\sigma)$ for all states due to the Araki-Lieb-Thirring inequality.
Inequality~\eqref{eq_DhRenyi} follows immediately from~\cite[Proposition~3.2]{audenaert12}. Inequality~\eqref{eq_DmaxRenyi} is shown in~\cite[Theorem 3]{anshu19}, tightening earlier bounds that were established as part of the fully quantum asymptotic equipartition property (QAEP)~\cite{tomamichel08}. The QAEP states that, for all $\eps \in (0,1)$, the regularized smooth entropies converge to the relative entropy
\begin{align}
	\label{eq:aep}
	\lim_{n \to \infty} \frac{1}{n} D_{\max}^{\eps,\Delta} \left(\rho^{\otimes n} \middle\| \sigma^{\otimes n} \right) = D(\rho\|\sigma) \,.
\end{align}
The analogous statement for $D_h^{\eps}$ is an immediate consequence from quantum Stein's lemma and its converse~\cite{hiai91,ogawa00}.

\section{Conditions for pairwise state transformation}
\label{sec:main}

In this section we derive sufficient conditions for the existence of a channel that transforms $(\rho_1,\sigma_1)$ to $(\rho_2,\sigma_2)$, where the first state is transformed either exactly or approximately, and the second state always has to be transformed exactly.

\subsection{Conditions for exact state transformation}
\label{sec:main-exact}

We start by considering the case of exact transformations. We note that conditions for an exact transformation derived in this section are very restrictive. If we allow for approximate transformations as done in Section~\ref{sec:main-approximate} we will find conditions that are considerably easier to fulfill.

For what follows, we can restrict ourselves to a very special class of transformations, namely, \textit{test-and-prepare channels} of the form
\begin{align}
\mathcal{E}(\rho)=\gamma_1\; \tr\, E\rho + \gamma_2\; \tr\,(\id-E)\rho \;,
\end{align}
where the $\gamma_i$'s are density matrices and 
$0\leq E\leq \id$. Hence test-and-prepare channels constitute a subset of measure-and-prepare channels, in which the measurement is a simple binary test. Ref.~\cite{Buscemi2017} provides a complete characterization of this case. The following lemma can be obtained as a consequence of the results in~\cite{Buscemi2017}, but we provide an independent proof here for the sake of the reader (see also Ref.~\cite{matsumoto2014-cq-transform}).

\begin{lemma}\label{lemma}
	Let $\rho_1, \sigma_1 \in \St{2}$ be commuting qubit quantum states and let $\rho_2, \sigma_2 \in \St{d}$.
	The following two conditions are equivalent:
	\begin{enumerate}[(i)]
		\item  \label{it_1} there exists a CPTP map $\mathcal{E}:\St{2}\to\St{d}$ such that $\mathcal{E}(\rho_1)=\rho_2$ and $\mathcal{E}(\sigma_1)=\sigma_2$;
		\item $D_{\max}(\rho_1\|\sigma_1)\ge D_{\max}(\rho_2\|\sigma_2)$ and $D_{\max}(\sigma_1\|\rho_1)\ge D_{\max}(\sigma_2\|\rho_2)$. \label{it_2}
	\end{enumerate}
\end{lemma}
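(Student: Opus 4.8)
The implication (i) $\Rightarrow$ (ii) is the easy direction: it is just the data-processing inequality for $D_{\max}$, since $\rho_1 \leq 2^\lambda \sigma_1$ implies $\cE(\rho_1) \leq 2^\lambda \cE(\sigma_1)$ for any completely positive trace-preserving $\cE$, so $D_{\max}(\rho_2\|\sigma_2) \leq D_{\max}(\rho_1\|\sigma_1)$, and symmetrically with the roles of $\rho$ and $\sigma$ exchanged.

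For the converse (ii) $\Rightarrow$ (i) the plan is to write down a single explicit test-and-prepare channel and check that hypothesis (ii) is exactly what makes it CPTP. Since $\rho_1$ and $\sigma_1$ commute, I would fix a common eigenbasis $\{\ket{0},\ket{1}\}$ of $\mathbb{C}^2$ and write $\rho_1 = p\proj{0} + (1-p)\proj{1}$ and $\sigma_1 = q\proj{0} + (1-q)\proj{1}$, where $q \in (0,1)$ because $\sigma_1$ is invertible. Both conditions of the lemma are invariant under the simultaneous exchange $(\rho_1,\rho_2)\leftrightarrow(\sigma_1,\sigma_2)$, so I may assume $p \geq q$; and if $p = q$ then $\rho_1 = \sigma_1$, so (ii) forces $\rho_2 = \sigma_2$ and the constant channel $\omega \mapsto \rho_2\, \tr\,\omega$ does the job. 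So assume $p > q$. A short computation then gives $D_{\max}(\rho_1\|\sigma_1) = \log(p/q)$ and $D_{\max}(\sigma_1\|\rho_1) = \log\frac{1-q}{1-p}$ (the latter equal to $+\infty$ when $p = 1$, consistent with $\rho_1 \not\gg \sigma_1$), so that (ii) is equivalent to the two operator inequalities $q\rho_2 \leq p\sigma_2$ and $(1-p)\sigma_2 \leq (1-q)\rho_2$.

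Now take the binary test $E := \proj{0}$, the projector onto the eigenvector on which $\rho_1$ dominates $\sigma_1$, so that $\tr\, E\rho_1 = p$ and $\tr\, E\sigma_1 = q$, and define $\cE(\omega) := \gamma_1 \tr(E\omega) + \gamma_2 \tr((\id-E)\omega)$ with $\gamma_1 := \big((1-q)\rho_2 - (1-p)\sigma_2\big)/(p-q)$ and $\gamma_2 := \big(p\sigma_2 - q\rho_2\big)/(p-q)$. These are chosen precisely so that $p\gamma_1 + (1-p)\gamma_2 = \rho_2$ and $q\gamma_1 + (1-q)\gamma_2 = \sigma_2$, hence $\cE(\rho_1) = \rho_2$ and $\cE(\sigma_1) = \sigma_2$; moreover $\tr\gamma_1 = \tr\gamma_2 = 1$ holds automatically since both numerators have trace $p-q$. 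It then remains only to verify positivity: $\gamma_1 \geq 0 \Leftrightarrow (1-q)\rho_2 \geq (1-p)\sigma_2$ and $\gamma_2 \geq 0 \Leftrightarrow p\sigma_2 \geq q\rho_2$, which are exactly the two inequalities that (ii) supplies. Thus $\gamma_1,\gamma_2$ are density matrices, $\cE$ is a legitimate test-and-prepare (hence CPTP) channel, and (i) holds.

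I do not expect any serious obstacle here; once the right object is guessed the proof is only a few lines. The one point that takes a moment of thought is realizing that one need not search over all measure-and-prepare channels — a single binary test in the common eigenbasis suffices — and that the natural choice is the projector onto the eigenvector where $\rho_1$ beats $\sigma_1$. I would also treat the boundary case $p = 1$ explicitly (there $D_{\max}(\sigma_1\|\rho_1) = +\infty$, the second constraint in (ii) is vacuous, $\gamma_1 = \rho_2 \geq 0$ automatically, and $\gamma_2 \geq 0$ reduces to $D_{\max}(\rho_2\|\sigma_2) \leq \log(1/q)$), and note that the case $p < q$ follows from the case $p > q$ via the swap symmetry used above.
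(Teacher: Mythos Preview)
Your proof is correct and takes essentially the same approach as the paper. The channel you write down is literally the paper's channel in different notation: the paper's output states $(\rho_2-m\sigma_2)/(1-m)$ and $(M\sigma_2-\rho_2)/(M-1)$, with $M=p/q$ and $m=(1-p)/(1-q)$, simplify exactly to your $\gamma_1$ and $\gamma_2$. The only cosmetic difference is the symmetry used to reduce to $p\ge q$: the paper swaps the basis vectors $\ket{0}\leftrightarrow\ket{1}$, whereas you swap the roles $(\rho_i)\leftrightarrow(\sigma_i)$. Both are valid, though note that your swap moves $\rho_1$ into the $\sigma$ slot, so the standing invertibility assumption on $\sigma$ need not survive; this is harmless since your construction only uses $p>q$ (not $q>0$), but it is worth saying explicitly.
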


\begin{proof}
	Since the implication \eqref{it_1}$\implies$\eqref{it_2} is just the data-processing inequality, we only need to prove the reverse implication \eqref{it_2}$\implies$\eqref{it_1}.
	
	By assumption, $\rho_1$ and $\sigma_1$ commute. Hence, we can see them as classical binary probability distributions, namely, $\rho_1 \leftrightarrow \vec{p}_1=(p,1-p)$ and $\sigma_1 \leftrightarrow \vec{q}_1=(q,1-q)$. Moreover, we can assume that $\frac{p}{q} \ge \frac{1-p}{1-q}$; otherwise, the first step is to map $(p,1-p)$ and $(q,1-q)$ into $(1-p,p)$ and $(1-q,q)$, respectively. Notice that this is equivalent to saying that, without loss of generality, we can always assume $p(1-q)\ge q(1-p)$, that is, $p\ge q$.
	
	Let us now define $M := \frac{p}{q}$ and $m := \frac{1-p}{1-q}$. Notice that $\log M=D_{\max}(\rho_1\|\sigma_1)$ and $\log m=-D_{\max}(\sigma_1\|\rho_1)$. Problems with such definitions occur when $q=0$ or $q=1$. If $q=1$, since (as noticed above) we can assume that $p\ge q$, then also $p=1$; that is, $\rho_1=\sigma_1$ and $D_{\max}(\rho_1\|\sigma_1)=-D_{\max}(\sigma_1\|\rho_1)=0$ (or, equivalently, $M=m=1$). By Condition~\eqref{it_2}, this implies that also $D_{\max}(\rho_2\|\sigma_2)=0$, namely, $\rho_2=\sigma_2$, and there is nothing to prove. The case $q=0$ will be addressed separately at the end of the proof. 
	
	For the moment, we thus assume that $\infty>M>1>m\ge0$, that is, $p>q>0$. Condition~\eqref{it_2} then guarantees that $M\sigma_2-\rho_2\ge 0$ and $\rho_2-m\sigma_2\ge 0$. We thus define a linear map $\mathcal{E}:\St{2}\to\St{d}$ as follows:
\begin{align}
\mathcal{E}(\cdot) := \<0|\cdot|0\>\frac{\rho_2-m\sigma_2}{1-m}+\<1|\cdot|1\>\frac{M\sigma_2-\rho_2}{M-1}\label{eq:expr2}\;.
\end{align}
By construction, $\mathcal{E}$ is clearly CPTP. On the other hand, by direct inspection, 
	$\mathcal{E}(\rho_1)=\rho_2$ and $\mathcal{E}(\sigma_1)=\sigma_2$.
	
	In order to conclude the proof, we are left to address the case in which $q=0$. In such a case, we consider a variant of the map in~(\ref{eq:expr2}), that is,
	\begin{align*}
	\mathcal{F}(\cdot) := \<0|\cdot|0\>\frac{\rho_2-m\sigma_2}{1-m}+\<1|\cdot|1\>\sigma_2\;.
	\end{align*}
	Again, by direct inspection it is easy to check that $\mathcal{F}$ is CPTP and that $\mathcal{E}(\rho_1)=\rho_2$ and $\mathcal{E}(\sigma_1)=\sigma_2$, as claimed.
\end{proof}

From the above, we obtain a sufficient condition for the existence of a transformation, more precisely a test-and-prepare channel, for arbitrary pairs of states.

\begin{corollary}\label{coro}
	Let $\rho_1, \sigma_1 \in \St{d_1}$ and $\rho_2, \sigma_2 \in \St{d_2}$. Then, if either
	\begin{align}\label{eq_suffCond}
	D_{\min}(\rho_1\|\sigma_1)\ge D_{\max}(\rho_2\|\sigma_2) \qquad \textrm{or} \qquad D_{\min}(\sigma_1\|\rho_1)\ge D_{\max}(\sigma_2\|\rho_2) \,,
	\end{align}
	then there exists a test-and-prepare channel $\mathcal{E}$ such that $\mathcal{E}(\rho_1)=\rho_2$ and $\mathcal{E}(\sigma_1)=\sigma_2$. 
\end{corollary}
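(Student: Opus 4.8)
The plan is to reduce the statement to Lemma~\ref{lemma} by pre-composing with a well-chosen binary measurement, so that the only nontrivial data becomes a pair of commuting qubit states. Since the two hypotheses in~\eqref{eq_suffCond} are exchanged under swapping $\rho_i \leftrightarrow \sigma_i$, I would treat only the case $D_{\min}(\rho_1\|\sigma_1) \ge D_{\max}(\rho_2\|\sigma_2)$.

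Concretely, let $\Pi := \Pi_{\rho_1}$ be the projector onto the support of $\rho_1$, and consider the test-and-prepare channel $\mathcal{M}:\St{d_1}\to\St{2}$ defined by $\mathcal{M}(X) := \proj{0}\,\tr(\Pi X) + \proj{1}\,\tr((\id-\Pi)X)$. Its images are $\rho_1' := \mathcal{M}(\rho_1) = \proj0$ and $\sigma_1' := \mathcal{M}(\sigma_1) = s\proj0 + (1-s)\proj1$, where $s := \tr(\sigma_1\Pi) = 2^{-D_{\min}(\rho_1\|\sigma_1)} \in (0,1]$ by the standing assumption that $\sigma_1$ has full support. These two states commute, and a direct computation gives $D_{\max}(\rho_1'\|\sigma_1') = \log(1/s) = D_{\min}(\rho_1\|\sigma_1)$. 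The key point is that $\mathcal{M}(X)$ is always diagonal in the computational basis, so for any CPTP map $\mathcal{E}':\St{2}\to\St{d_2}$ the composition $\mathcal{E}'\circ\mathcal{M}$ is again a test-and-prepare channel, namely $X \mapsto \tr(\Pi X)\,\mathcal{E}'(\proj0) + \tr((\id-\Pi)X)\,\mathcal{E}'(\proj1)$.

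It then remains to check that $(\rho_1',\sigma_1')$ and $(\rho_2,\sigma_2)$ satisfy condition~\eqref{it_2} of Lemma~\ref{lemma}. The first inequality there is $D_{\max}(\rho_1'\|\sigma_1') = D_{\min}(\rho_1\|\sigma_1) \ge D_{\max}(\rho_2\|\sigma_2)$, which is exactly the assumed hypothesis. For the second inequality, if $s<1$ then the support of $\sigma_1'$ is strictly larger than that of $\rho_1'$, so $D_{\max}(\sigma_1'\|\rho_1') = +\infty \ge D_{\max}(\sigma_2\|\rho_2)$ trivially; and if $s=1$ then $D_{\min}(\rho_1\|\sigma_1)=0$ forces $D_{\max}(\rho_2\|\sigma_2)=0$, hence $\rho_2=\sigma_2$ (both states have unit trace), so the inequality becomes $0 \ge 0$. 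In either case Lemma~\ref{lemma} supplies a CPTP map $\mathcal{E}':\St{2}\to\St{d_2}$ with $\mathcal{E}'(\rho_1')=\rho_2$ and $\mathcal{E}'(\sigma_1')=\sigma_2$, and $\mathcal{E} := \mathcal{E}'\circ\mathcal{M}$ is the desired test-and-prepare channel, since $\mathcal{E}(\rho_1)=\mathcal{E}'(\rho_1')=\rho_2$ and $\mathcal{E}(\sigma_1)=\mathcal{E}'(\sigma_1')=\sigma_2$.

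There is no substantial obstacle once Lemma~\ref{lemma} is in hand; the only delicate point is the bookkeeping around the degenerate case $s=1$ (i.e.\ $\rho_1$ of full support) and the finiteness conventions for $D_{\min}$. If one prefers, the Lemma can be bypassed and the channel written out explicitly as $\mathcal{E}(X) = \tr(\Pi X)\,\rho_2 + \tr((\id-\Pi)X)\,\tfrac{\sigma_2 - s\rho_2}{1-s}$ when $s<1$---its complete positivity is precisely the condition $\rho_2 \le 2^{D_{\min}(\rho_1\|\sigma_1)}\sigma_2$---and as the channel $X\mapsto\tr(X)\,\rho_2$ when $s=1$.
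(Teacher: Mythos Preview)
Your proof is correct and follows essentially the same route as the paper: reduce to Lemma~\ref{lemma} by pre-composing with the binary measurement $\mathcal{M}$ onto the support projector of $\rho_1$, then verify condition~\eqref{it_2} using $D_{\max}(\rho_1'\|\sigma_1')=D_{\min}(\rho_1\|\sigma_1)$ and $D_{\max}(\sigma_1'\|\rho_1')=+\infty$ (with the same degenerate-case bookkeeping at $s=1$). Your explicit channel in the final paragraph is precisely what the construction in the proof of Lemma~\ref{lemma} yields when specialized to $p=1$, $q=s$.
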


\begin{proof}
	It suffices to show the statement under the first condition; the second then follows by symmetry.
	Let us consider the measurement channel
	\[
	\mathcal{M}(\cdot) := \Tr[\cdot\;\Pi]|0\>\<0|+\Tr[\cdot\;(\id-\Pi)]|1\>\<1|\;,
	\]
	where $\Pi$ is the projector onto the support of $\rho_1$. Then, the binary classical probability distributions obtained from the pair $(\rho_1,\sigma_1)$ are $\mathcal{M}(\rho_1) \leftrightarrow \vec{p}_1 = (p,1-p)=(1,0)$ and $\mathcal{M}(\sigma_1) \leftrightarrow \vec{q}_1= (q,1-q)=(\Tr[\Pi\;\sigma_1],\Tr[(\id-\Pi)\;\sigma_1])$.
	
	We only need to show that, if Eq.~(\ref{eq_suffCond}) holds, then $\vec{p}_1$ and $\vec{q}_1$ satisfy condition~\eqref{it_2} in Lemma~\ref{lemma}. That is indeed the case, since, on the one hand,
	\[
	\frac{p}{q}=\frac{1}{\Tr[\Pi\;\sigma_1]} =  2^{D_{\min}(\rho_1\|\sigma_1)}\;,
	\]
	that is, $D_{\max}(\vec{p}_1\|\vec{q}_1)=D_{\min}(\rho_1\|\sigma_1)$, so that Eq.~(\ref{eq_suffCond}) guarantees that $D_{\max}(\vec{p}_1\|\vec{q}_1)\ge D_{\max}(\rho_2\|\sigma_2)$. Notice that, in the above equation, the case $q=0$ can be excluded, as it is equivalent to $\rho_1$ and $\sigma_1$ being perfectly distinguishable, and thus exactly transformable into any other pair of states.
	
	On the other hand, because
	\[
	2^{-D_{\max}(\vec{q}_1\|\vec{p}_1)}=\frac{1-p}{1-q} = 0 \le 2^{-D_{\max}(\sigma_2\|\rho_2)}\;.
	\]
	 the remaining condition, $D_{\max}(\vec{q}_1\|\vec{p}_1)\ge D_{\max}(\sigma_2\|\rho_2)$, is guaranteed by definition. In the above equation, the case $q=1$ can be excluded, as it is equivalent to $\rho_1$ and $\sigma_1$ having the same support, that is, $D_{\min}(\rho_1\|\sigma_1)=D_{\min}(\sigma_1\|\rho_1)=0$, which, by virtue of assumption~(\ref{eq_suffCond}), implies that $\rho_2=\sigma_2$, so that there is nothing to prove.
\end{proof}

%

We note that condition~\eqref{eq_suffCond} is very strong in general. To see this, it is enough to simply consider two states with coinciding support, which is what happens generically (in the generic case, states are invertible with probability one). Then, since in such a case $D_{\min}(\rho_1 \| \sigma_1)=D_{\min}(\sigma_1 \| \rho_1)=0$, the only dichotomies satisfying Eq.~(\ref{eq_suffCond}) are the trivial ones, i.e., those with $\rho_2=\sigma_2$, as a consequence of the fact that max-relative entropy vanishes if and only if the two arguments coincide.

However, when errors are allowed in condition~(\ref{eq_suffCond}), in what follows we show that a much more flexible framework can be constructed, which is in fact sufficient to derive strong asymptotic results.

\subsection{Sufficient condition for approximate state transformation}
\label{sec:main-approximate}

In this section we are interested in approximate state transformation, i.e., a transformation from $(\rho_1, \sigma_1)$ to $(\rho_2,\sigma_2)$ where we allow for a (small) error in the transformation $\rho_1 \to \rho_2$, while the transformation $\sigma_1 \to \sigma_2$ is required to be exact.

\begin{lemma} \label{lem:smooth}
 Let $\eps_1, \eps_2 \in (0,1)$, $\rho_1, \sigma_1 \in \St{d_1}$ and $\rho_2, \sigma_2 \in \St{d_2}$. If either
\begin{align} \label{eq_assProp}
 D_h^{\eps_1}(\rho_1\|\sigma_1) \geq D_{\max}^{\eps_2, T}(\rho_2\|\sigma_2) \qquad \textrm{or} \qquad
 D_h^{\eps_1}(\rho_1\|\sigma_1) \geq D_{\max}^{\eps_2, P}(\rho_2\|\sigma_2)  \,,
\end{align}
then there exists a test-and-prepare quantum channel $\mathcal{E}: \St{d_1} \to \St{d_2}$ satisfying $\mathcal{E}(\sigma_1) = \sigma_2$ and $T( \mathcal{E}(\rho_1),\rho_2) \leq \eps_1+\eps_2$ or $P( \mathcal{E}(\rho_1),\rho_2) \leq \sqrt{\eps_1}+\eps_2$, respectively.
\end{lemma}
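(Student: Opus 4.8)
The plan is to upgrade the exact construction underlying Corollary~\ref{coro} by smoothing on both ends simultaneously: I would replace $\rho_2$ by a nearby state $\tilde\rho_2$ for which the max-divergence condition holds \emph{exactly}, and I would use the optimal test from $D_h^{\eps_1}$ as the binary measurement of a test-and-prepare channel, so that the unavoidable failure probability of this test is absorbed into the error budget on the first state while the second state is reproduced exactly.

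In detail, I would first fix a test $Q$ attaining the infimum in~\eqref{eq_Dh}, so that $0\leq Q\leq\id$, $p:=\tr\,\rho_1 Q\geq 1-\eps_1$, and $q:=\tr\,\sigma_1 Q = 2^{-D_h^{\eps_1}(\rho_1\|\sigma_1)}$ (the infimum is attained by compactness). Since $\sigma_1$ is invertible and $\eps_1<1$ one checks that $q\in(0,1]$, with $q=1$ only if $D_h^{\eps_1}(\rho_1\|\sigma_1)=0$. I would then pick $\tilde\rho_2\in\cB_{\eps_2}^{\Delta}(\rho_2)$ attaining $D_{\max}(\tilde\rho_2\|\sigma_2)=D_{\max}^{\eps_2,\Delta}(\rho_2\|\sigma_2)$, where $\Delta\in\{T,P\}$ is the metric appearing in the relevant hypothesis; then~\eqref{eq_assProp} says precisely that $D_{\max}(\tilde\rho_2\|\sigma_2)\leq-\log q$, i.e.\ $q\,\tilde\rho_2\leq\sigma_2$.

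Assuming for now $q<1$, I would define the test-and-prepare channel
\begin{align}
\mathcal{E}(\cdot):=\tr(Q\,\cdot)\,\tilde\rho_2+\tr\big((\id-Q)\,\cdot\big)\,\omega\,, \qquad \text{with}\quad \omega:=\frac{\sigma_2-q\,\tilde\rho_2}{1-q}\,,
\end{align}
which is CPTP since $\omega\geq0$ (by $q\,\tilde\rho_2\leq\sigma_2$) and $\tr\,\omega=1$. By construction $\mathcal{E}(\sigma_1)=q\,\tilde\rho_2+(1-q)\,\omega=\sigma_2$ exactly, while $\mathcal{E}(\rho_1)=p\,\tilde\rho_2+(1-p)\,\omega\geq p\,\tilde\rho_2$. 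For the trace-distance statement, $T(\mathcal{E}(\rho_1),\tilde\rho_2)=(1-p)\,T(\tilde\rho_2,\omega)\leq 1-p\leq\eps_1$, so the triangle inequality together with $T(\rho_2,\tilde\rho_2)\leq\eps_2$ gives $T(\mathcal{E}(\rho_1),\rho_2)\leq\eps_1+\eps_2$. For the purified-distance statement, $\mathcal{E}(\rho_1)\geq p\,\tilde\rho_2$ yields $D_{\max}(\tilde\rho_2\|\mathcal{E}(\rho_1))\leq\log(1/p)$, hence $F(\tilde\rho_2,\mathcal{E}(\rho_1))\geq 2^{-D_{\max}(\tilde\rho_2\|\mathcal{E}(\rho_1))}\geq p$ because $\tilde D_{1/2}(\cdot\|\cdot)=-\log F(\cdot,\cdot)$ is a monotone lower bound on $D_{\max}$, so $P(\mathcal{E}(\rho_1),\tilde\rho_2)\leq\sqrt{1-p}\leq\sqrt{\eps_1}$, and the triangle inequality with $P(\rho_2,\tilde\rho_2)\leq\eps_2$ gives $P(\mathcal{E}(\rho_1),\rho_2)\leq\sqrt{\eps_1}+\eps_2$. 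In the remaining case $q=1$ one has $D_{\max}^{\eps_2,\Delta}(\rho_2\|\sigma_2)=0$, hence $\tilde\rho_2=\sigma_2$ and $\Delta(\rho_2,\sigma_2)\leq\eps_2$, and the constant channel $\mathcal{E}(\cdot):=\tr(\cdot)\,\sigma_2$ does the job.

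The one place where genuine care is needed is the purified-distance bound: routing it through the Fuchs--van de Graaf inequality from $T(\mathcal{E}(\rho_1),\tilde\rho_2)\leq\eps_1$ would only give $\sqrt{2\eps_1}$, so instead one has to exploit the operator inequality $\mathcal{E}(\rho_1)\geq p\,\tilde\rho_2$ directly, via the relation $\tilde D_{1/2}\leq D_{\max}$, to control the fidelity by $p$. Everything else is routine bookkeeping, modulo confirming $q\in(0,1]$ and disposing of the degenerate case $q=1$.
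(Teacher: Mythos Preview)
Your argument is correct and follows the same construction as the paper: the same test-and-prepare channel built from the optimal hypothesis-testing operator $Q$ and the smoothing state $\tilde\rho_2$, followed by the triangle inequality through $\tilde\rho_2$. The differences are only in how the intermediate bound $\Delta(\cE(\rho_1),\tilde\rho_2)$ is established. For the trace distance you write $\cE(\rho_1)=p\,\tilde\rho_2+(1-p)\,\omega$ and use simply that $T(\tilde\rho_2,\omega)\le 1$, whereas the paper expands $\omega-\tilde\rho_2=(\sigma_2-\tilde\rho_2)/(1-q)$ and separately bounds $\|\tilde\rho_2-\sigma_2\|_1\le 2(1-q)$; these are the same estimate in disguise, with your phrasing being the more direct one. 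For the purified distance you invoke $\cE(\rho_1)\ge p\,\tilde\rho_2$ and the chain $-\log F=\tilde D_{1/2}\le D_{\max}$, while the paper uses concavity of the fidelity together with $D_{\max}(\tilde\rho_2\|\sigma_2)\ge -\log F(\tilde\rho_2,\sigma_2)$; both yield $F(\tilde\rho_2,\cE(\rho_1))\ge 1-\eps_1$. Your explicit treatment of the degenerate case $q=1$ is a nice touch (the paper sidesteps it via the assertion $\tr\sigma_1 Q^*\le\tr\rho_1 Q^*=1-\eps_1<1$).
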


\begin{proof}
Let $Q^*$ denote the optimizer in~\eqref{eq_Dh} for $D_h^{\eps_1}(\rho_1\|\sigma_1)$ that satisfies
\begin{align} \label{eq_dhQ}
 2^{-D_h^{\eps_1}(\rho_1\|\sigma_1)}= \tr\, \sigma_1 Q^* \qquad \textrm{and} \qquad
 \tr \sigma_1 Q^* \leq \tr \, \rho_1 Q^* = 1 - \eps_1 < 1 \, .
\end{align} 
By definition of the smooth max-relative entropy, there furthermore exists a state $\tilde \rho_2 \in \cB_{\eps_2}^{\Delta}(\rho_2)$ for $\Delta \in \{T, P\}$, such that
\begin{align}
D_{\max}^{\eps_2,\Delta}(\rho_2\|\sigma_2) =  D_{\max}(\tilde \rho_2 \| \sigma_2) \, .
\end{align}

Consider now the mapping
\begin{align}
\cE \,:  \,X \mapsto  \cE(X) &= \tilde \rho_2 \tr(X Q^*) +\tr \big(X (\id - Q^*)\big) \frac{\sigma_2 - \tilde \rho_2 \tr(\sigma_1 Q^*)}{\tr \big(\sigma_1 (\id - Q^*)\big)} \\
&= \left( \tr X - \frac{\tr \big(X (\id - Q^*)\big)}{\tr \big(\sigma_1 (\id - Q^*)\big)}\right) \tilde \rho_2 + \left( \frac{\tr \big(X (\id - Q^*)\big)}{\tr \big(\sigma_1 (\id - Q^*)\big)} \right) \sigma_2 \, . \label{eq:themapping2}
\end{align}
We start by proving that $\cE$ is a quantum channel, i.e., a trace-preserving completely positive map. To see that $\cE$ is trace-preserving is straightforward since $\sigma_2$ and $\tilde \rho_2$ are density operators. We note that because $0 \leq Q^* \leq \id$ it suffices to show that $\sigma_2 \geq \tilde \rho_2 \tr(\sigma_1 Q^*)$ in order prove that $\cE$ is completely positive. By definition of the smooth max-relative entropy and by using~\eqref{eq_assProp} and~\eqref{eq_dhQ} we find
\begin{align} \label{eq_fromAss}
2^{-D_{\max}(\tilde \rho_2 \| \sigma_2)} 
= 2^{-D_{\max}^{\eps_2,\Delta}(\rho_2 \| \sigma_2)}  
\geq 2^{-D_h^{\eps_1}(\rho_1 \| \sigma_1)}
= \tr \sigma_1 Q^* \, . 
\end{align}
By definition of the max-relative entropy we thus have
\begin{align} \label{eq_DS1}
\sigma_2 \geq \tilde \rho_2 \, 2^{- D_{\max}(\tilde \rho_2 \| \sigma_2)} \geq  \tilde \rho_2 \, \tr(\sigma_1 Q^{*}) \, .
\end{align}

We have seen that $\cE$ is indeed a quantum channel. It thus remains to show that $\mathcal{E}(\sigma_1) = \sigma_2$ and $\Delta( \mathcal{E}(\rho_1), \rho_2) \leq \eps_1 + \eps_2$. The first property is straightforward to verify. 
The second property requires some more work. 
Define $\bar \rho_2 := \cE(\rho_1)$. The triangle inequality immediately yields
\begin{align}
	\Delta(\rho_2, \bar \rho_2) \leq \eps_2 + \Delta(\tilde \rho_2, \bar \rho_2) \,, \label{eq_almostDone}
\end{align}
and it thus remains to bound the second term. 

Let us first consider the case where $\Delta(\rho,\tau) = T(\rho, \tau) = \frac12\| \rho - \tau \|_1$ is the trace distance. Substituting the expression in~\eqref{eq:themapping2}, we find
\begin{align}
\norm{\tilde \rho_2 - \bar \rho_2 }_1 = \frac{1- \tr \rho_1 Q^*}{1 - \tr \sigma_1 Q^*} \| \tilde \rho_2 - \sigma_2 \|_1 \leq \eps_1 \frac{\| \tilde \rho_2 - \sigma_2 \|_1}{1-2^{-D_{\max}(\tilde \rho_2 \| \sigma_2)}} \,,
\end{align}
where we used~\eqref{eq_dhQ} and~\eqref{eq_fromAss} in the final step. Using~\eqref{eq_DS1} we find for $\Pi_+$ denoting the projector on to the positive support of $\rho_2 - \sigma_2$
\begin{align}
\norm{\tilde \rho_2 - \sigma_2}_1
= 2\, \tr\, \Pi_+(\tilde \rho_2 - \sigma_2)
&\leq 2\Big(1-2^{-D_{\max}(\tilde \rho_2 \| \sigma_2)}\Big) \tr\, \Pi_+ \tilde \rho_2 \\
&\leq 2\Big(1-2^{-D_{\max}(\tilde \rho_2 \| \sigma_2)}\Big) \, .
\end{align}
Combining this with~\eqref{eq_almostDone} finally gives
$\Delta\big(\cE(\rho_1),\rho_2\big) \leq \eps_1 + \eps_2$, concluding the proof for the trace distance.

Let us now consider the case where $\Delta(\rho,\tau) = P(\rho,\tau) = \sqrt{1- F(\rho,\tau)}$ is the purified distance.
First, we recall that $\tr \rho_1 Q^* \geq \tr \sigma_1 Q^*$ by definition of $Q^*$ as the optimizer for the hypothesis testing divergence.
We then use the concavity of fidelity (see, e.g.,~\cite[Lemma 3.4]{mybook}) and the expression in~\eqref{eq:themapping2} to bound
\begin{align}
	F(\tilde \rho_2, \bar \rho_2) &\geq \left( 1 - \frac{1- \tr \rho_1 Q^*}{1 - \tr \sigma_1 Q^*}\right) + \left( \frac{1- \tr \rho_1 Q^*}{1 - \tr \sigma_1 Q^*} \right) F(\tilde \rho_2, \sigma_2) \\
	&= 1 - \frac{1- \tr \rho_1 Q^*}{1 - \tr \sigma_1 Q^*} \left( 1 - F(\tilde \rho_2, \sigma_2) \right) \\
	&\geq 1 - \eps_1 \frac{1 - F(\tilde \rho_2, \sigma_2)}{1-2^{-D_{\max}(\tilde \rho_2 \| \sigma_2)}} \,.
\end{align}
Using the monotonicity of sandwiched R\'enyi relative entropy we find $D_{\max}(\tilde \rho_2\|\sigma_2) \geq -\log F(\tilde \rho_2, \sigma_2)$ and $F(\tilde \rho_2, \bar \rho_2) \geq 1 - \eps_1$, concluding that $\Delta\big(\cE(\rho_1),\rho_2\big) \leq \sqrt{\eps_1} + \eps_2$.
\end{proof}

\subsection{Conditions for asymptotic state transformation}
\label{sec:main-asymptotic}

In the following we will consider an asymptotic setting given by four sequences of states $\vec{\rho}_1 = \{\rho_1^n\}_n$, ${\vec{\sigma}}_1 = \{\sigma_1^n\}_n$, ${\vec{\rho}}_2 = \{\rho_2^n\}_n$, and ${\vec{\sigma}}_2 = \{\sigma_2^n\}_n$ for $n \in \N$. We assume no specific structure for these states and the underlying Hilbert spaces, i.e.\ in general we have $\rho_1^n, \sigma_1^n \in \St{d_1^n}$ and $\rho_2^n, \sigma_2^n \in \St{d_2^n}$ for arbitrary dimensions $\{d_1^n\}_n$ and $\{d_2^n\}_n$.  
The only requirement that we impose is that, for $i \in \{1, 2\}$, the limits
\begin{align}
	\tilde D_{\alpha}( \vec{\rho}_i \| \vec{\sigma}_i ) := \lim_{n \to \infty} \frac{1}{n} \tilde D_{\alpha}(\rho_i^n \| \sigma_i^n) \label{eq:condition}
\end{align}
exist and are continuous in $\alpha$ at $\alpha = 1$. To simplify notation we write $D(\vec{\rho}_i \| \vec{\sigma}_i )$ := $\tilde D_{1}(\vec{\rho}_i \| \vec{\sigma}_i )$.

A simple sequence satisfying this condition is given by independent and identically distributed (iid) states, i.e.\ the sequences determined by $\rho_i^n = \rho_i^{\otimes n}$ and $\sigma_i^n = \sigma_i^{\otimes n}$ for $i \in \{1, 2\}$. In this case the expressions simplify due to the additivity of the R\'enyi divergence for product states and we have $\tilde D_{\alpha}(\vec{\rho}_i \| \vec{\sigma}_i ) = \tilde D_{\alpha}(\rho_i\|\sigma_i)$ as well as $D(\vec{\rho}_i \| \vec{\sigma}_i ) = D(\rho_i\|\sigma_i)$. A slightly more involved example is constructed by taking tensor products of states, each randomly chosen from some finite sets. The resulting sequences of product states satisfy $\tilde D_{\alpha}(\vec{\rho}_i \| \vec{\sigma}_i ) = \mathbb{E} [ \tilde D_{\alpha}(\rho_i\|\sigma_i) ]$ by the law of large numbers, where the expectation is taken over the joint distribution of $\rho_1, \sigma_1, \rho_2$ and $\sigma_2$.

In the following we show that sufficient and necessary conditions for asymptotic state transformations are determined by $\lambda_1 := D(\vec{\rho}_1 \| \vec{\sigma}_1 )$ and $\lambda_2 := D(\vec{\rho}_2 \| \vec{\sigma}_2 )$. 
\begin{itemize}
	\item If $\lambda_1 > \lambda_2$ we show the existence of a sequence of channels that transform $(\rho_1^n,\sigma_1^n)$ to $(\rho_2^n,\sigma_2^n)$ where the transformation $\rho_1^n \to \rho_2^n$ has an error that is vanishing exponentially as $n \to \infty$ and the transformation $\sigma_1^n \to \sigma_2^n$ is exact. 
	\item On the other hand, if $\lambda_1 < \lambda_2$ we show that any  transformation for which $\sigma_1^n \to \sigma_2^n$ is exact leads to an error exponentially approaching one as $n \to \infty$ in the transformation $\rho_1^n \to \rho_2^n$. 
\end{itemize}
We note that the case where $\lambda_1 = \lambda_2$ is left as an open question. For example, in case of four states $\rho_1,\rho_2,\sigma_1,$ and $\sigma_2$ such that $D(\rho_1 \| \sigma_1) = D(\rho_2 \| \sigma_2)$ it is unknown if there exists a sequence of channels that take $\sigma_1^{\otimes n}$ to $\sigma_2^{\otimes n}$ for each $n$ and that take $\rho_1^{\otimes n}$ to $\rho_2^{\otimes n}$ up to asymptotically vanishing error.

Our main technical results are formally presented in the next two theorems.

\begin{theorem}[Achievability with exponentially small error]
  \label{th:main}
Let $\vec{\rho}_1$, $\vec{\sigma}_1$, $\vec{\rho}_2$ and $\vec{\sigma}_2$ for $n \in \N$ be sequences satisfying the condition in~\eqref{eq:condition} and furthermore
   \begin{align}
		D(\vec{\rho}_1 \| \vec{\sigma}_1 ) > D(\vec{\rho}_2 \| \vec{\sigma}_2 ) \,.
   \end{align}
   Then there exists $\gamma>0$, $n_0 \in \N$, and sequence $\{\cE_n\}_{n \in \N}$ of (test-and-prepare) quantum channels such that 
	\begin{align}
	  \cE_n(\sigma_1^n)= \sigma_2^n  \quad \forall n \in \N
		\qquad \textnormal{and} \qquad
	  \Delta\big( \cE_n(\rho_1^n), \rho_2^n \big) \leq 2^{-\gamma n} \quad \forall n\geq n_0  \, .
	\end{align}
\end{theorem}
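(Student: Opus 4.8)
The plan is to invoke Lemma~\ref{lem:smooth} with error parameters that decay exponentially in $n$, and to check its hypothesis~\eqref{eq_assProp} for all large $n$ using the one-shot bounds of Proposition~\ref{prop:renyi} together with the regularization assumption~\eqref{eq:condition} and continuity of $\tilde D_\alpha$ at $\alpha=1$.

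\textbf{Reduction to a one-shot inequality.} First I would fix constants $\gamma_1,\gamma_2>0$ (to be pinned down later) and set $\eps_1:=2^{-\gamma_1 n}$ and $\eps_2:=2^{-\gamma_2 n}$, which lie in $(0,1)$ once $n$ is large. By Lemma~\ref{lem:smooth} it then suffices to show that
\begin{align}
D_h^{\eps_1}(\rho_1^n\|\sigma_1^n) \;\ge\; D_{\max}^{\eps_2,\Delta}(\rho_2^n\|\sigma_2^n) \qquad \text{for all large } n \,,
\end{align}
for $\Delta\in\{T,P\}$. Indeed, the resulting test-and-prepare channel $\cE_n$ satisfies $\cE_n(\sigma_1^n)=\sigma_2^n$ exactly, while the transformation error on the $\rho$-side is at most $\eps_1+\eps_2=2^{-\gamma_1 n}+2^{-\gamma_2 n}$ for the trace distance and $\sqrt{\eps_1}+\eps_2=2^{-\gamma_1 n/2}+2^{-\gamma_2 n}$ for the purified distance; in either case this is bounded by $2^{-\gamma n}$ for all $n\ge n_0$, where $\gamma$ is any positive constant strictly smaller than $\min\{\gamma_1,\gamma_2\}$ (resp.\ $\min\{\gamma_1/2,\gamma_2\}$) and $n_0$ is chosen accordingly.

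\textbf{Verifying the one-shot inequality asymptotically.} Fix $\alpha\in(0,1)$ and $\beta\in(1,\infty)$, both close to $1$. Applying Proposition~\ref{prop:renyi} (using $\bar D_\alpha\ge\tilde D_\alpha$ to pass from the Petz to the sandwiched divergence in~\eqref{eq_DhRenyi}) and dividing by $n$ gives
\begin{align}
\tfrac1n D_h^{\eps_1}(\rho_1^n\|\sigma_1^n) &\;\ge\; \tfrac1n \tilde D_\alpha(\rho_1^n\|\sigma_1^n) - \tfrac{\alpha}{1-\alpha}\,\gamma_1 \,, \\
\tfrac1n D_{\max}^{\eps_2,\Delta}(\rho_2^n\|\sigma_2^n) &\;\le\; \tfrac1n \tilde D_\beta(\rho_2^n\|\sigma_2^n) + \tfrac{2\gamma_2}{\beta-1} + \tfrac1n\log\tfrac1{1-\eps_2^2} \,.
\end{align}
Letting $n\to\infty$ and using~\eqref{eq:condition} (the last term above vanishes), the right-hand sides converge to $\tilde D_\alpha(\vec{\rho}_1\|\vec{\sigma}_1)-\tfrac{\alpha}{1-\alpha}\gamma_1$ and $\tilde D_\beta(\vec{\rho}_2\|\vec{\sigma}_2)+\tfrac{2\gamma_2}{\beta-1}$, respectively. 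By continuity at $\alpha=1$ we may choose $\alpha<1$ and $\beta>1$ close enough to $1$ that $\tilde D_\alpha(\vec{\rho}_1\|\vec{\sigma}_1)-\tilde D_\beta(\vec{\rho}_2\|\vec{\sigma}_2)\ge\tfrac12\big(D(\vec{\rho}_1\|\vec{\sigma}_1)-D(\vec{\rho}_2\|\vec{\sigma}_2)\big)>0$; then, with $\alpha,\beta$ now fixed, choose $\gamma_1,\gamma_2>0$ small enough that $\tfrac{\alpha}{1-\alpha}\gamma_1+\tfrac{2\gamma_2}{\beta-1}<\tfrac12\big(D(\vec{\rho}_1\|\vec{\sigma}_1)-D(\vec{\rho}_2\|\vec{\sigma}_2)\big)$. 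With these choices $\liminf_n \tfrac1n D_h^{\eps_1}(\rho_1^n\|\sigma_1^n)$ strictly exceeds $\limsup_n \tfrac1n D_{\max}^{\eps_2,\Delta}(\rho_2^n\|\sigma_2^n)$, so the required inequality holds for all $n\ge n_0$.

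\textbf{Main obstacle.} There is no deep estimate here; the one point to handle carefully is the order of the parameter choices. Since $\tfrac{\alpha}{1-\alpha}$ and $\tfrac1{\beta-1}$ blow up as $\alpha,\beta\to1$, the exponents $\gamma_1,\gamma_2$ must be selected \emph{after} $\alpha$ and $\beta$ have been fixed, and it is worth noting that we only need one-sided (liminf/limsup) control of the normalized quantities, not convergence of $\tfrac1n D_h^{\eps_1}$ or $\tfrac1n D_{\max}^{\eps_2,\Delta}$ themselves. The remainder is bookkeeping: extracting $\gamma>0$ and $n_0$ from the above, and observing that both cases $\Delta\in\{T,P\}$ are covered identically.
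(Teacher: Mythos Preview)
Your proof is correct and follows essentially the same route as the paper: reduce to Lemma~\ref{lem:smooth} with exponentially small error parameters, then verify the one-shot hypothesis using Proposition~\ref{prop:renyi} with R\'enyi orders close to $1$, continuity at $\alpha=1$, and the regularization assumption~\eqref{eq:condition}. The only cosmetic differences are that the paper uses a single $\eps_n=\tfrac12 2^{-\gamma n}$ for both smoothing parameters (and a single $\delta$ with $\alpha=1-\delta$, $\beta=1+\delta$), proves the trace-distance case and passes to purified distance via Fuchs--van~de~Graaf, and pins down an explicit exponent $\gamma=\kappa\delta/8$.
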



\begin{remark}
	The above theorem strengthens~\cite[Theorem~2.7]{matsumoto10} in that we are also able to show the exponential decay of the error in the transformation. Furthermore, we extend the result~\cite[Theorem~2.7]{matsumoto10} to hold for states that do not necessarily satisfy an iid~assumption. 
\end{remark}

\begin{proof}
	We show the statement for trace distance, and the statement for purified distance then follows by the Fuchs-van de Graaf inequality.

By the assumption of the theorem and the continuity guaranteed by the assumption at the beginning of the section, there exists a $\delta > 0$ and $\kappa > 0$ such that 
\begin{align}
	\tilde D_{1-\delta}(\vec{\rho}_1 \| \vec{\sigma}_1 ) \geq \tilde D_{1+\delta}(\vec{\rho}_2 \| \vec{\sigma}_2 ) + \kappa \,. 
\end{align}
Hence, by their definition as limits, there exists a $n_0 \in \N$ such that for all $n \geq n_0$,
\begin{align}
	\tilde D_{1-\delta}( \rho_1^n \| \sigma_1^n) \geq \tilde D_{1+\delta}(\rho_2^n \| \sigma_2^n) + \frac{n \kappa}{2} .\label{eq:alphacont}
\end{align}

Let us now set $\eps_n = \frac12 2^{-\gamma n}$ for some $\gamma > 0$ to be determined later. From Lemma~\ref{lem:smooth} we learn that the maps $\mathcal{E}_n$ with the desired properties exist if
\begin{align}
	D_h^{\eps_n}(\rho_1^n\|\sigma_1^n) \geq D_{\max}^{\eps_n,T}(\rho_2^n\|\sigma_2^n) \label{eq:ordering} \,.
\end{align}
Indeed, Proposition~\ref{prop:renyi} together with~\eqref{eq:alphacont} imply
\begin{align}
	D_h^{\eps_n}(\rho_1^n\|\sigma_1^n) - D_{\max}^{\eps_n,T}(\rho_2^n\|\sigma_2^n) &\geq \frac{n\kappa}{2} - \frac{1-\delta}{\delta}\log \frac{1}{\eps_n} - \frac{1}{\delta}\log \frac{1}{\eps_n^2} - \log \frac{1}{1-\eps_n^2} \label{eq:penultimate} \\
	&\geq \frac{n \kappa}{2} - \frac{3}{\delta} \log \frac{1}{\eps_n} \label{eq:last}\\
	&\geq \frac{n \kappa}{2} - \frac{4 \gamma n}{\delta} \, ,
\end{align}
where to reach~\eqref{eq:last} we used that $1-\eps_n^2 \geq \eps_n$ since $\eps_n \leq \frac12$, and in the last step we assumed $\gamma n \geq 3$. We conclude that the choice $\gamma = \frac{\kappa\delta}{8}$ ensures that~\eqref{eq:ordering} holds.
\end{proof}

\begin{theorem}[Exponential strong converse]
	\label{th:converse}
	Let $\vec{\rho}_1$, $\vec{\sigma}_1$, $\vec{\rho}_2$ and $\vec{\sigma}_2$ for $n \in \N$ be sequences satisfying the condition in~\eqref{eq:condition} and furthermore
	\begin{align}
		 D(\vec{\rho}_1 \| \vec{\sigma}_1 ) < D(\vec{\rho}_2 \| \vec{\sigma}_2 ) \,.
	\end{align}
Then there exists $\gamma >0$ such that for all sequences of quantum channels $\{\cE_n\}_{n \in \N}$ that satisfy $\cE_n(\sigma_1^n) = \sigma_2^n$ there exists an $n_0 \in \N$ such that for all $n\geq n_0$ we have
\begin{align}
\Delta\big( \cE_n(\rho_1^n), \rho_2^n \big) \geq 1 - 2^{- \gamma n} \, .
\end{align}
\end{theorem}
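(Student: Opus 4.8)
The plan is to prove the bound for the trace distance $\Delta = T$; since the Fuchs--van de Graaf inequalities give $P \geq T$, the statement for $\Delta = P$ follows immediately. The core is a ``meta-converse'' argument. For each $n$ I would construct a binary test $Q_n$ on the output system of $\cE_n$ that distinguishes $\rho_2^n$ from $\sigma_2^n$ with a type-I error decaying exponentially \emph{and} a type-II error exponent $\theta$ chosen strictly between $\lambda_1 := D(\vec{\rho}_1\|\vec{\sigma}_1)$ and $\lambda_2 := D(\vec{\rho}_2\|\vec{\sigma}_2)$, and then pull this test back through $\cE_n$. Because $\cE_n$ maps $\sigma_1^n$ \emph{exactly} to $\sigma_2^n$, the Heisenberg-picture pull-back $R_n := \cE_n^\dagger(Q_n)$ satisfies $\tr\sigma_1^n R_n = \tr\sigma_2^n Q_n \leq 2^{-n\theta}$. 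A strong-converse statement for testing $\rho_1^n$ against $\sigma_1^n$ — whose relative-entropy rate $\lambda_1$ is strictly below $\theta$ — then forces the acceptance probability $\tr\rho_1^n R_n = \tr\cE_n(\rho_1^n)Q_n$ to be exponentially small. Comparing this with $\tr\rho_2^n Q_n \approx 1$ and using $T(\cE_n(\rho_1^n),\rho_2^n) \geq \tr Q_n\rho_2^n - \tr Q_n\cE_n(\rho_1^n)$ closes the argument.

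Concretely, after fixing $\theta$ with $\lambda_1 < \theta < \lambda_2$, I would assemble two quantitative ingredients. \textbf{(i) The test.} Applying inequality~\eqref{eq_DhRenyi} (with $\tilde D_\alpha$ in place of $\bar D_\alpha$) at $\alpha = 1-\delta$ for small $\delta$, and invoking the assumed continuity of $\alpha \mapsto \tilde D_\alpha(\vec{\rho}_2\|\vec{\sigma}_2)$ at $\alpha = 1$, one finds a constant $c>0$ with $D_h^{2^{-cn}}(\rho_2^n\|\sigma_2^n) \geq n\theta$ for all large $n$; taking $Q_n$ to be an optimizer in~\eqref{eq_Dh} gives $\tr\rho_2^n Q_n \geq 1 - 2^{-cn}$ and $\tr\sigma_2^n Q_n \leq 2^{-n\theta}$. \textbf{(ii) The strong converse.} Chaining the second inequality of Proposition~\ref{prop:DhDmax} with~\eqref{eq_DmaxRenyi} at $\alpha = 1+\delta$, with both smoothing parameters of Proposition~\ref{prop:DhDmax} set to $2^{-c'n}$, produces a bound of the shape $D_h^{1-2^{1-c'n}}(\rho_1^n\|\sigma_1^n) \leq \tilde D_{1+\delta}(\rho_1^n\|\sigma_1^n) + (\tfrac1\delta+2)c'n + \log 4$; using continuity at $\alpha = 1$ to push $\tilde D_{1+\delta}(\vec{\rho}_1\|\vec{\sigma}_1)$ close to $\lambda_1$ and then choosing $c'$ small enough yields $D_h^{1-2^{1-c'n}}(\rho_1^n\|\sigma_1^n) < n\theta$ for all large $n$.

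Putting the pieces together, set $R_n := \cE_n^\dagger(Q_n)$, so $0 \leq R_n \leq \id$. Since $\cE_n(\sigma_1^n) = \sigma_2^n$ we have $\tr\sigma_1^n R_n = \tr\sigma_2^n Q_n \leq 2^{-n\theta}$. If $\tr\rho_1^n R_n$ were at least $2^{1-c'n}$, then $R_n$ would be feasible in~\eqref{eq_Dh} for $D_h^{1-2^{1-c'n}}(\rho_1^n\|\sigma_1^n)$, forcing $\tr\sigma_1^n R_n \geq 2^{-D_h^{1-2^{1-c'n}}(\rho_1^n\|\sigma_1^n)} > 2^{-n\theta}$, a contradiction; hence $\tr\cE_n(\rho_1^n)Q_n = \tr\rho_1^n R_n < 2^{1-c'n}$. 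Therefore $T(\cE_n(\rho_1^n),\rho_2^n) \geq \tr Q_n\rho_2^n - \tr Q_n\cE_n(\rho_1^n) > 1 - 2^{-cn} - 2^{1-c'n} \geq 1 - 2^{-\gamma n}$ for $\gamma := \tfrac12\min\{c,c'\}$ and all $n$ past some $n_0$. All the relevant ``large $n$'' thresholds depend only on the convergence rates of $\tfrac1n\tilde D_{1\pm\delta}(\rho_i^n\|\sigma_i^n)$, hence the conclusion is uniform over admissible channel sequences, and the purified-distance version follows from $P\geq T$.

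The delicate step is ingredient (ii): obtaining a \emph{strong-converse-type} upper bound on $D_h$ at a smoothing parameter that tends to $1$ exponentially fast, while still keeping the bound strictly below $n\theta$. This forces exponentially small smoothing parameters inside Proposition~\ref{prop:DhDmax} and hence correction terms of size $O(c'n)$; the whole scheme works only because those corrections have rate $O(c')$ that can be made smaller than the gap $\theta - \lambda_1$, and because both the $\alpha = 1\pm\delta$ Rényi estimates can be realised from continuity at $\alpha = 1$ alone (rather than from the i.i.d.\ QAEP~\eqref{eq:aep}). The remaining steps — the pull-back identity, the data-processing of the support constraint for $R_n$, and the final trace-distance bound — are routine.
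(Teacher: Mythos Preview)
Your proof is correct. It uses the same quantitative ingredients as the paper (Propositions~\ref{prop:DhDmax} and~\ref{prop:renyi} together with continuity of $\alpha\mapsto\tilde D_\alpha$ at $\alpha=1$), but organises them differently. The paper argues by contrapositive and routes everything through the smooth max-divergence: it lower-bounds $\tilde D_{1+\delta}(\rho_1^n\|\sigma_1^n)$ by $D_{\max}^{\eps_n,P}(\rho_1^n\|\sigma_1^n)$, applies data-processing for $D_{\max}^{\eps,P}$ together with the triangle inequality for the purified distance to reach $D_{\max}^{1-\eps_n,P}(\rho_2^n\|\sigma_2^n)$, and then converts back to $\tilde D_{1-\delta}(\rho_2^n\|\sigma_2^n)$ via Proposition~\ref{prop:DhDmax} and~\eqref{eq_DhRenyi}, obtaining a contradiction with~\eqref{eq:alphacont2}. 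Your route is an operational meta-converse: you fix a threshold $\theta\in(\lambda_1,\lambda_2)$, extract from~\eqref{eq_DhRenyi} an explicit test $Q_n$ on the output with exponentially small type-I error and type-II error at most $2^{-n\theta}$, pull it back as $R_n=\cE_n^\dagger(Q_n)$, and use the chain (second inequality of Proposition~\ref{prop:DhDmax})\,$+$\,\eqref{eq_DmaxRenyi} as a strong-converse upper bound on $D_h^{1-2^{1-c'n}}(\rho_1^n\|\sigma_1^n)$ to force $\tr\cE_n(\rho_1^n)Q_n$ to be exponentially small; the final bound then comes from the variational formula for trace distance rather than a smooth-entropy triangle inequality. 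Your version makes the distinguishing test explicit and yields a channel-independent $n_0$ directly; the paper's version is more compact and lives naturally in purified distance. Both are valid and essentially equivalent in strength.
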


\begin{proof}
	We show the statement for purified distance, and the statement for trace distance then follows by the Fuchs-van de Graaf inequality.

We again start by observing that the assumption of the theorem and the continuity guaranteed by the assumption at the beginning of the section imply the existence of a $\delta > 0$ and $\kappa > 0$ such that $\tilde D_{1+\delta}(\vec{\rho}_1 \| \vec{\sigma}_1 ) \leq \tilde D_{1-\delta}( \vec{\rho}_2 \| \vec{\sigma}_2) - \kappa$, and thus, there exists a $n_0 \in \N$ such that for all $n \geq n_0$, we have
\begin{align}
	\tilde D_{1+\delta}( \rho_1^n \| \sigma_1^n) - \tilde D_{1-\delta}(\rho_2^n \| \sigma_2^n) \leq - \frac{n \kappa}{2} .\label{eq:alphacont2}
\end{align}

It thus suffices to prove that~\eqref{eq:alphacont2} implies the desired property for all sequences of quantum channels. In the following we prove the contrapositive. Suppose that for all $\gamma > 0$, there exists a family of channels $\{\cE_n\}_{n \in \N}$ such that for some $n \geq n_0$ we have both $\cE_n(\sigma_1^n)= \sigma_2^n$ and  $P(\cE_n(\rho_1^n), \rho_2^n) < 1 - 2^{- \gamma n}$. Let us then fix a $\gamma$, to be determined later, and set $\eps_n = \frac12 2^{-\gamma n}$.
By Proposition~\ref{prop:renyi} we have
\begin{align}
\tilde D_{1+\delta} (\rho_1^n \| \sigma_1^n)
&\geq D_{\max}^{\eps_n,P}(\rho_1^n \| \sigma_1^n) - \frac{1}{\delta} \log \frac{1}{\eps_n^2} - \log \frac{1}{1-\eps_n^2} \\
& \geq D_{\max}^{\eps_n,P} \big( \mathcal{E}_n(\rho_1^n ) \big\| \sigma_2^n\big) - \frac{1}{\delta} \log \frac{1}{\eps_n^2} - \log \frac{1}{\eps_n} \\
& \geq D_{\max}^{1-\eps_n,P}(\rho_2^n \| \sigma_2^n) - \frac{1}{\delta} \log \frac{1}{\eps_n^2} - \log \frac{1}{\eps_n} \label{eq:comb1}
\end{align}
where the penultimate step uses the data-processing inequality for the smooth max-divergence and the fact that $1 - \eps_n^2 \geq \eps_n$ since $\eps_n \leq \frac12$. The final step follows from the definition of the smooth max-relative entropy as an optimization over a ball of close states and the triangle inequality of the purified distance. 

Instantiating Proposition~\ref{prop:DhDmax} with $\eps = (1-\eps_n)^2$ and $\nu = \eps_n - \eps_n^2$ further yields
\begin{align}
	D_{\max}^{1-\eps_n,P}(\rho_2^n \| \sigma_2^n) &\geq D_h^{\eps_n}(\rho_2^n \| \sigma_2^n) + \log \frac{1}{2\eps_n - \eps_n^2} - 2 \log \frac{2}{\eps_n - \eps_n^2} \\
	&\geq D_h^{\eps_n}(\rho_2^n \| \sigma_2^n) - 2 \log \frac{4}{\eps_n} \\
	&\geq \tilde D_{1-\delta}(\rho_2^n \| \sigma_2^n) - \frac{1-\delta}{\delta} \log \frac{1}{\eps_n} - 2 \log \frac{4}{\eps_n}
	, \label{eq:comb2}
\end{align}
where for the second inequality we used that $\eps_n^2 \leq \frac12 \eps_n$ since $\eps_n \leq \frac12$ and the last inequality is a consequence of Proposition~\ref{prop:renyi}. Combining~\eqref{eq:comb1} and~\eqref{eq:comb2} we find
\begin{align}
	\tilde D_{1+\delta}( \rho_1^n \| \sigma_1^n) - \tilde D_{1-\delta}(\rho_2^n \| \sigma_2^n) &\geq - \frac{3}{\delta} \log \frac{1}{\eps_n} - 2 \log \frac{4}{\eps_n} \\
	&= - n \gamma \left( \frac{3}{\delta} + 2 \right) -\frac{3}{\delta} - 6 \, .
\end{align}
However, this contradicts~\eqref{eq:alphacont2} for sufficiently large $n_0$ (and thus $n$) as long as $\gamma$ chosen small enough so that $\gamma \left( \frac{3}{\delta} + 2 \right) < \frac{\kappa}{2}$.
\end{proof}

It is worth noting that we made no attempts to characterize the exact error exponents and strong converse exponents here. This is because finding the exact error exponent for this problem is still open even in simple commutative cases where, for example, $\sigma_1$ and $\sigma_2$ are proportional to the identity and thus commute with $\rho_1$ and $\rho_2$.



\section{Rates for pairwise state transformations}
\label{sec:rates}

From an information theoretic perspective, a natural question to ask is at what rate we can transform between pairs of states asymptotically. The main result of this section has been suggested already in the work of Matsumoto~\cite{matsumoto10}, and is a consequence of Theorems~\ref{th:main} and~\ref{th:converse} of the previous section. It has been obtained independently in concurrent work~\cite{wang19}.
Hence, in this section we ask at what rate we can transform a pair of states $(\rho_1, \sigma_1 )$ into $(\rho_2, \sigma_2 )$ up to some asymptotically vanishing error in the first transformation.  To make this precise, let us first say that a triplet $(n, m, \eps)$ for $m, n \in \mathbb{N}$ and $\eps \in [0, 1]$ is achievable by a transformation if and only if there exists a channel $\cE$ such that 
\begin{align} \label{eq_mapProp}
 \cE\big( \sigma_1^{\otimes n} \big) = \sigma_2^{\otimes m} \quad \text{and} \quad \Delta\big( \cE( \rho_1^{\otimes n}),\rho_2^{\otimes m} \big) \leq \eps \, .
\end{align}
We can now define the maximal achievable pairwise state transformation rate with error $\eps$ on input block length $n$ as 
\begin{align}
\hat R_{\rho_1,\sigma_1 \to \rho_2, \sigma_2}^{\,\eps,\Delta}(n):=\max \left\{ \frac{m}{n} \, : \, (n, m, \eps) \text{ is achievable} \right\} \, .
\end{align}

Our goal is to understand the asymptotics of this quantity for $n \to \infty$ when $\eps$ is constant. We can determine the first order asymptotics of $\hat R_{\rho_1,\sigma_1 \to \rho_2, \sigma_2}^{\,\eps,\Delta}(n)$, which turns out to be independent of the metric $\Delta \in \{T, P\}$ and~$\eps$.

\begin{theorem}
	\label{th:rates}
  Let $\rho_1, \sigma_1 \in \St{d_1}$ and $\rho_2, \sigma_2 \in \St{d_2}$ be two pairs of states.
  For all $\eps \in (0,1)$, the pairwise state transformation rate satisfies
  \begin{align}\label{eq_rate}
 \lim_{n \to \infty} \hat R_{\rho_1,\sigma_1 \to \rho_2, \sigma_2}^{\,\eps,\Delta}(n) = \frac{D(\rho_1\|\sigma_1)}{D(\rho_2\|\sigma_2)} \,.
\end{align}
\end{theorem}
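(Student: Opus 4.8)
The plan is to derive Theorem~\ref{th:rates} as a direct corollary of the asymptotic achievability and converse results, Theorems~\ref{th:main} and~\ref{th:converse}, by recognizing that the sequences $\vec\rho_1^{\,n} := \rho_1^{\otimes n}$, $\vec\sigma_1^{\,n} := \sigma_1^{\otimes n}$, $\vec\rho_2^{\,n} := \rho_2^{\otimes m(n)}$, $\vec\sigma_2^{\,n} := \sigma_2^{\otimes m(n)}$, for an appropriately chosen block-length function $m(n)$, satisfy the convergence condition~\eqref{eq:condition}. Indeed, by additivity of the sandwiched R\'enyi divergence on tensor powers, $\frac1n \tilde D_\alpha(\rho_i^{\otimes m(n)} \| \sigma_i^{\otimes m(n)}) = \frac{m(n)}{n}\tilde D_\alpha(\rho_i\|\sigma_i)$, so if $m(n)/n \to r$ then the limit in~\eqref{eq:condition} exists and equals $r\,\tilde D_\alpha(\rho_i\|\sigma_i)$, which is continuous in $\alpha$ at $\alpha=1$ since $\tilde D_\alpha(\rho_i\|\sigma_i)$ is. Thus $D(\vec\rho_1\|\vec\sigma_1) = D(\rho_1\|\sigma_1)$ while $D(\vec\rho_2\|\vec\sigma_2) = r\,D(\rho_2\|\sigma_2)$.

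For the lower bound (achievability), fix any rate $r < D(\rho_1\|\sigma_1)/D(\rho_2\|\sigma_2)$ and set $m(n) := \lfloor rn \rfloor$. Then $D(\vec\rho_1\|\vec\sigma_1) = D(\rho_1\|\sigma_1) > r\,D(\rho_2\|\sigma_2) = D(\vec\rho_2\|\vec\sigma_2)$, so Theorem~\ref{th:main} furnishes a sequence of channels $\cE_n$ with $\cE_n(\sigma_1^{\otimes n}) = \sigma_2^{\otimes m(n)}$ and $\Delta(\cE_n(\rho_1^{\otimes n}), \rho_2^{\otimes m(n)}) \leq 2^{-\gamma n}$ for all large $n$; in particular for any fixed $\eps \in (0,1)$ this error is $\leq \eps$ eventually, so $(n, m(n), \eps)$ is achievable and $\hat R^{\,\eps,\Delta}(n) \geq m(n)/n \to r$. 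Letting $r$ approach the ratio gives $\liminf_n \hat R^{\,\eps,\Delta}(n) \geq D(\rho_1\|\sigma_1)/D(\rho_2\|\sigma_2)$. For the upper bound (converse), fix $r > D(\rho_1\|\sigma_1)/D(\rho_2\|\sigma_2)$ and again set $m(n) := \lceil rn \rceil$; now $D(\vec\rho_1\|\vec\sigma_1) = D(\rho_1\|\sigma_1) < r\,D(\rho_2\|\sigma_2) = D(\vec\rho_2\|\vec\sigma_2)$, so Theorem~\ref{th:converse} shows that any channel sequence respecting the $\sigma$-constraint incurs $\Delta(\cE_n(\rho_1^{\otimes n}), \rho_2^{\otimes m(n)}) \geq 1 - 2^{-\gamma n} > \eps$ for all large $n$. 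Hence $(n, m(n), \eps)$ is not achievable for large $n$, and since $\hat R^{\,\eps,\Delta}(n)$ is (weakly) monotone in the target block length, no $m \geq m(n)$ is achievable either, giving $\hat R^{\,\eps,\Delta}(n) < m(n)/n$ eventually and $\limsup_n \hat R^{\,\eps,\Delta}(n) \leq r$. Letting $r$ decrease to the ratio completes the argument.

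The one genuinely non-routine point, and the place I would be most careful, is the monotonicity claim used in the converse: I need that if $(n, m', \eps)$ is achievable then so is $(n, m, \eps)$ for every $m \leq m'$, so that an upper bound obtained at the particular block length $m(n) = \lceil rn\rceil$ actually bounds the maximum over all achievable $m$. This follows because from a channel realizing $(n, m', \eps)$ one can post-compose with the partial trace over the last $m' - m$ systems (on both the $\rho$ and $\sigma$ side), which maps $\sigma_2^{\otimes m'} \mapsto \sigma_2^{\otimes m}$ exactly and can only decrease $\Delta$ by data processing; hence $(n, m, \eps)$ is achievable. With this observation in place everything else is a matter of chaining the two theorems with the $\eps$-independence being automatic, since Theorems~\ref{th:main} and~\ref{th:converse} hold for every metric $\Delta \in \{T,P\}$ and the error bounds $2^{-\gamma n}$ and $1 - 2^{-\gamma n}$ straddle any fixed $\eps \in (0,1)$ for large enough $n$. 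A minor subtlety worth a sentence is that $D(\rho_2\|\sigma_2) > 0$ is needed for the ratio to be finite and for the converse to bite; this holds unless $\rho_2 = \sigma_2$, a degenerate case one can note separately (if $\rho_2 = \sigma_2$ the rate is trivially infinite, consistent with the convention $1/0 = \infty$).
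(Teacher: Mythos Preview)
Your argument is correct and packages Theorem~\ref{th:rates} as a clean corollary of Theorems~\ref{th:main} and~\ref{th:converse}, but the paper proceeds somewhat differently on both halves.

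For achievability, the paper fixes a \emph{rational} rate $R = p/q$, applies Theorem~\ref{th:main} to the i.i.d.\ sequences with base states $\rho_1^{\otimes q}$ and $\rho_2^{\otimes p}$, obtains channels along the subsequence $n \in q\mathbb{N}$, and then fills in the gaps $nq+1,\dots,nq+q-1$ by discarding input systems before invoking density of $\mathbb{Q}$. You instead feed the \emph{non}-i.i.d.\ target sequences $\rho_2^{\otimes \lfloor rn\rfloor}$, $\sigma_2^{\otimes \lfloor rn\rfloor}$ directly into Theorem~\ref{th:main} after verifying~\eqref{eq:condition}; this is slicker and avoids the rational-approximation detour altogether. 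For the converse, the paper does \emph{not} invoke Theorem~\ref{th:converse} at all: it argues directly from the QAEP~\eqref{eq:aep} together with data-processing for $D_{\max}^{\delta,\Delta}$, which sidesteps any need for the partial-trace monotonicity you single out. Your route has the aesthetic advantage of making the theorem a pure consequence of the two asymptotic results, but it costs you that monotonicity observation and one further step you pass over a little quickly: Theorem~\ref{th:converse} as stated only says that every \emph{sequence} of $\sigma$-preserving channels eventually fails, with an $n_0$ that may depend on the sequence, whereas you need that for all large $n$ \emph{every single} channel fails at block length $m(n)$. This follows by a one-line diagonalization (if infinitely many $n$ admitted an achieving channel, string those channels together into a sequence, pad the remaining indices with the constant $\sigma_2^{\otimes m(n)}$-output channel, and contradict the theorem), and it would be worth making this explicit.
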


\begin{remark}
	Another viewpoint on this question can be taken by fixing a rate $R$ and asking how the minimal achievable error $\eps$ behaves as a function of $n$.
	Strong qualitative statements for when we transform above and below the critical rate $D(\rho_1\|\sigma_1)/D(\rho_2\|\sigma_2)$ are immediate from Theorems~\ref{th:main} and~\ref{th:converse}. Namely, for transformations below the critical rate the error will drop exponentially as $n \to \infty$, and for transformations above the critical rate the error will approach one exponentially fast as $n \to \infty$.
\end{remark}

\begin{proof}
First consider the case $D(\rho_2\|\sigma_2) = 0$. This implies that $\rho_2 = \sigma_2$ and we can  employ a sequence of constant output channels, $\mathcal{E}_n(\cdot) = \sigma_2^{\otimes m}$, for any $m \in \mathbb{N}$, which implies $\hat R_{\rho_1,\sigma_1 \to \rho_2, \sigma_2}^{\,\eps,\Delta}(n) = \infty$.

 In case $D(\rho_2\|\sigma_2) > 0$ we first show that the rate~\eqref{eq_rate} can be achieved. 
 %
 Let $R = \frac{p}{q} >0$ with $p, q \in \mathbb{N}$ with $D(\rho_1\|\sigma_1) > R D(\rho_2\|\sigma_2)$. The additivity of the relative entropy for tensor product states yields
	\begin{align}
	  D\big(\rho_1^{\otimes q} \big\|\sigma_1^{\otimes q}\big) > D\big(\rho_2^{\otimes p } \big\|\sigma_2^{\otimes p} \big)\,.
	\end{align}
	Hence, Theorem~\ref{th:main} implies the existence of a sequence of channels $\{ \mathcal{E}_{n} \}_{n \in \mathbb{N}}$ such that
\begin{align}
\cE_{n}(\sigma_1^{\otimes n q }) = \sigma_2^{\otimes R n q}   \qquad \text{and} \qquad  \Delta \big( \cE_{n}(\rho_1^{\otimes nq}) , \rho_2^{\otimes R n q}\big) \leq \eps \quad \text{ for large enough $n$ \, . }
\end{align}
We thus showed that $(nq, Rnq, \eps)$ is achievable. Moreover, by just throwing away $s \in \{1, 2, \ldots, q-1\}$ systems, we also know that $(nq+s, Rnq, \eps)$ is achievable. Hence,
\begin{align*}
	\hat R_{\rho_1,\sigma_1 \to \rho_2, \sigma_2}^{\,\eps,\Delta}(nq + s) \geq R \cdot \frac{ nq }{nq + s} \, ,
\end{align*}
and, thus, $\liminf_{N \to \infty}  \hat R_{\rho_1,\sigma_1 \to \rho_2, \sigma_2}^{\,\eps,\Delta}(N) \geq R$. Hence the achievability statement follows because $\mathbb{Q}$ is dense in $\mathbb{R}$.

We next prove a strong converse, i.e., we show that if $\eps$ is bounded away from $1$ then for large $n$ we must have $D(\rho_1\|\sigma_1) \geq R D(\rho_2\|\sigma_2)$. Let $\{\cE_n\}_{n\in\N}$ be a sequence of channels satisfying~\eqref{eq_mapProp}.
The fully quantum asymptotic equipartition property~\cite{tomamichel08,mybook} states that for any $\delta \in (0,1-\eps)$ we have
\begin{align}
D(\rho_1 \| \sigma_1) 
&= \lim_{n \to \infty} \frac{1}{n} D_{\max}^{\delta,\Delta}(\rho_1^{\otimes n} \| \sigma_1^{\otimes n})\\
&\geq  \lim_{n \to \infty} \frac{1}{n} D_{\max}^{\delta,\Delta}\big(\cE_n(\rho_1^{\otimes n}) \| \cE_n( \sigma_1^{\otimes n} )\big)\\
&=\lim_{n \to \infty} \frac{1}{n} D_{\max}^{\delta,\Delta}\big(\cE_n(\rho_1^{\otimes n}) \| \sigma_2^{\otimes  \lceil R n \rceil} \big) \, ,
\end{align}
where the penultimate step follows from the data-processing inequality for the smooth max-relative entropy~\cite{datta08}.
Because $\Delta(\cE_n(\rho_1^{\otimes n}), \rho_2^{\otimes  \lceil R n \rceil}) \leq \eps$, we further find 
\begin{align}
\lim_{n \to \infty} \frac{1}{n} D_{\max}^{\delta,\Delta}\big(\cE_n(\rho_1^{\otimes n}) \| \sigma_2^{\otimes  \lceil R n \rceil} \big)
\geq \lim_{n \to \infty} \frac{1}{n} D_{\max}^{\delta+ \eps,\Delta}\big(\rho_2^{\otimes  \lceil R n \rceil} \| \sigma_2^{\otimes  \lceil R n \rceil} \big)
= R D(\rho_2 \| \sigma_2) \, ,
\end{align}
where the final step again uses the asymptotic equipartition property. 
Combining all of this gives	 $D(\rho_1 \| \sigma_1) \geq R D(\rho_2\|\sigma_2)$, concluding the proof.
\end{proof}

\section{Applications to resource theories}
\label{sec:app}

The above results have some immediate consequences in resource theories: in what follows we consider in particular the resource theory of athermality and the resource theory of coherence. In concurrent work, Wang and Wilde~\cite{wang19} also derived some of our results, interpreting them in terms of a resource theory of ``asymmetric distinguishability''. Our perspective is different insofar as we interpret Theorems~\ref{th:main}, \ref{th:converse} and~\ref{th:rates} as building blocks that have applications in different resource theories. 

Let us first consider the resource theory of athermality under Gibbs-preserving maps.  There, we are given a Hamiltonian $E$, an inverse temperature $\beta$ and a Gibbs state $\gamma = \frac{1}{Z} e^{-\beta E}$, where $Z$ is the normalization factor.
One then asks whether there exists a quantum channel that has the Gibbs state $\gamma$ of a quantum system as a fixed point and transforms $\rho_1$ to $\rho_2$. Theorem~\ref{th:main} reveals that there exists a sequence of Gibbs-preserving maps from $\rho_1^{\otimes n}$ to $\rho_2^{\otimes n}$, with exponentially vanishing error as $n \to \infty$, if
\begin{align}
	D(\rho_1\|\gamma) > D(\rho_2\|\gamma) \qquad \textrm{or, equivalently,} \qquad F_H(\rho_1) > F_H(\rho_2) \, , \label{eq:gibbs}
\end{align} 
where we used that the Helmholtz free energy $F_H = U - T H$, where $U$ is the internal energy, $T$ is the temperature and $H$ is the entropy, that is,
\begin{align}
	F_H(\rho) = \tr \rho E + \frac{1}{\beta} \tr \rho \log \rho  = \frac{1}{\beta} \left( D(\rho \| \gamma) - \log Z \right) \,.
\end{align} 
Furthermore, as shown in Theorem~\ref{th:converse}, no such sequence of Gibbs-preserving maps can exist if the inequality in~\eqref{eq:gibbs} is strictly reversed. In fact, any sequence of Gibbs-preserving maps would incur an error approaching one exponentially fast as $n \to \infty$. In concurrent work by Sagawa \emph{et al.}~\cite{sagawa19} similar results are discussed for the resource theory of athermality with thermal operations (which is more restrictive than the Gibbs-preserving maps discussed here) and for sequences of states beyond the iid case. Since thermal operations are Gibbs-preserving, our strong converse results also apply to their setting, showing that there is no asymptotic advantage in allowing Gibbs-preserving maps.

Another resource theory in which Theorem~\ref{th:rates} above plays a role is the resource theory of coherence, in particular, the resource theory of coherence based on dephasing-covariant incoherent operations (DIO)~\cite{Chitambar2016,Marvian2016}. A DIO operation $\mathcal{E}$ is such that its action commutes with the completely dephasing channel $\mathsf{diag}$, that is
\begin{align}
\mathcal{E}\circ\mathsf{diag} = \mathsf{diag}\circ\mathcal{E}\;.
\end{align}
In this framework, the rate at which coherence can be distilled from an initial resource state $\rho$ is defined, as usual, as the optimal rate at which the transformation
\begin{align}
\rho^{\otimes n}\xrightarrow{\operatorname{DIO}}|+\>\<+|^{\otimes m}\;,
\end{align}
where $|+\>=\frac{1}{\sqrt{2}}(|0\>+|1\>)$ is one unit of coherence,
can be achieved with asymptotically vanishing error. Such a rate is known to be equal to $D(\rho\|\mathsf{diag}(\rho))$~\cite{Regula2018,Chitambar2018}. Recently, a relaxation of the DIO paradigm has been proposed and motivated~\cite{rhoDIO2019}: instead of requiring that the transformation $\mathcal{E}$ and the completely dephasing channel $\mathsf{diag}$ commute on all states, the commutation relation is enforced only on the initial resource state $\rho$. In other words, one considers the $\rho$-DIO condition
\begin{align}
(\mathcal{E}\circ\mathsf{diag})(\rho) = (\mathsf{diag}\circ\mathcal{E})(\rho)\;.
\end{align}
The existence of a $\rho$-DIO channel transforming $\rho$ into $\sigma$ can then be easily reformulated as the existence of a channel $\mathcal{E}$ achieving the following mapping of quantum dichotomies:
\begin{align}
(\rho,\mathsf{diag}(\rho))\xrightarrow{\mathcal{E}}(\sigma,\mathsf{diag}(\sigma))\;.
\end{align}
Once formulated in this form, our Theorem~\ref{th:rates} implies that the rate at which coherence can be distilled from $\rho$ by means of $\rho$-DIO operations is given by $D(\rho\|\mathsf{diag}(\rho))$. The asymptotic distillation rate under $\rho$-DIO has been independently computed in Ref.~\cite{rhoDIO2019}. Beyond that, Theorem~\ref{th:converse} establishes an exponential strong converse which implies that if we try to distill at a rate exceeding $D(\rho\|\mathsf{diag}(\rho))$ then the error will go to one exponentially fast.
Interestingly, since the asymptotic distillation rate is the same for both DIO and $\rho$-DIO operations, and since $\rho$-DIO operations constitute a larger set than DIO operations, we have that the above mentioned exponential strong converse property holds for DIO state distillation too. Beyond that, our results also yield an exponential strong converse for general DIO transformations between arbitrary states $\rho$ and $\sigma$ for any rate exceeding the ratio of $D(\rho\|\mathsf{diag}(\rho))$ and $D(\sigma\|\mathsf{diag}(\sigma))$.

\section{Discussion}
\label{sec:discussion}

Given the result of Theorem~\ref{th:rates}, we are left to wonder how quickly $\hat R_{\rho_1,\sigma_1 \to \rho_2, \sigma_2}^{\eps,\Delta}(n)$ approaches the asymptotic limit, or, in other words: what are the higher order terms in the expansion of $\hat R_{\rho_1,\sigma_1 \to \rho_2, \sigma_2}^{\eps,\Delta}(n)$ for large $n$?
Such questions have recently attracted a lot of interest in both classical and quantum information theory. In the quantum setting, the first tight results in this direction were achieved for hypothesis testing~\cite{li12,tomamichel12}:
\begin{align}
	D_h^{\eps}(\rho^{\otimes n}\|\sigma^{\otimes n}) = n D(\rho\|\sigma) + \sqrt{n V(\rho\|\sigma)} \Phi^{-1}(\eps) + O(\log n) ,
\end{align}
where $V(\rho\|\sigma) := \tr \rho \left( \log \rho - \log \sigma \right)^2 - D(\rho\|\sigma)^2$ is the \emph{relative entropy variance}, i.e.\ a quantum generalization of the variance of the log-likelihood ratio, and $\Phi(\cdot)$ is the cumulative normal distribution function. This is the constant error regime of quantum hypothesis testing.
Similar second-order expansions can be derived for the case where $\eps$ is not constant but approaches $0$ and $1$ slower than exponentially, the so-called moderate deviation regime~\cite{chubb17,cheng17}.

Our goal is to establish similar asymptotic expansions for $\hat R_{\rho_1,\sigma_1 \to \rho_2, \sigma_2}^{\eps,\Delta}(n)$. While this is ultimately beyond the scope of this work, we will justify the following conjecture.

\begin{conjecture}
	Let $\rho_1, \sigma_1 \in \St{d_1}$ and $\rho_2, \sigma_2 \in \St{d_2}$ such that
	\begin{align}
		\nu := \frac{D(\rho_1\|\sigma_1)}{V(\rho_1\|\sigma_1)} \cdot \frac{V(\rho_2\|\sigma_2)}{D(\rho_2\|\sigma_2)} \,.
	\end{align}
	is finite, i.e.\ $V(\rho_1\|\sigma_1) > 0$ and $D(\rho_2\|\sigma_2) > 0$. Then, for any $\eps \in (0,1)$, we have
	\begin{align}
		\lim_{n \to \infty}  \sqrt{n} \left( \hat R_{\rho_1,\sigma_1 \to \rho_2, \sigma_2}^{\eps,P}(n) - \frac{D(\rho_1\|\sigma_1)}{D(\rho_2\|\sigma_2)} \right) =
		\frac{\sqrt{V(\rho_1\|\sigma_1)}}{D(\rho_2\|\sigma_2)} Z_{\nu}^{-1}(\eps) \, , \label{eq:conj1}
	\end{align}
	where $Z_{\nu}(\cdot)$ is the cumulative Rayleigh-normal distribution function~\cite{kumagai13}.
\end{conjecture}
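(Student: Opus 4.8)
The plan is to sandwich the rate $\hat R_{\rho_1,\sigma_1 \to \rho_2, \sigma_2}^{\,\eps,P}(n)$ between the one-shot achievability bound of Lemma~\ref{lem:smooth} and a one-shot converse obtained exactly as in the proof of Theorem~\ref{th:converse}, and then to substitute the known Gaussian (second-order) expansions of the one-shot divergences involved. Abbreviate $D_i := D(\rho_i\|\sigma_i)$, $V_i := V(\rho_i\|\sigma_i)$, $R := D_1/D_2$, and write the output block length as $m = R n + a\sqrt{n}$, so that identifying $\lim_n \sqrt{n}\,(\hat R^{\,\eps,P}(n) - R)$ amounts to finding the largest second-order coefficient $a^{*}$ for which $(n,\lceil Rn+a\sqrt n\rceil,\eps)$ is achievable for all large $n$. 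Besides the hypothesis-testing expansion $D_h^{\eta}(\rho^{\otimes n}\|\sigma^{\otimes n}) = n D(\rho\|\sigma) + \sqrt{n V(\rho\|\sigma)}\,\Phi^{-1}(\eta) + O(\log n)$ recalled above, I would use its companion $D_{\max}^{\eta,P}(\rho^{\otimes n}\|\sigma^{\otimes n}) = n D(\rho\|\sigma) + \sqrt{n V(\rho\|\sigma)}\,\Phi^{-1}(1-\eta^{2}) + O(\log n)$, which follows from the former via Proposition~\ref{prop:DhDmax} (letting the auxiliary smoothing parameter there tend to zero slowly enough that only an $O(\log n)$ correction is incurred) together with the continuity of $\Phi^{-1}$ on $(0,1)$.

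For achievability, Lemma~\ref{lem:smooth} guarantees a channel mapping $(\rho_1^{\otimes n},\sigma_1^{\otimes n})$ to $(\rho_2^{\otimes m},\sigma_2^{\otimes m})$ with $\sigma$ exact and purified-distance error at most $\sqrt{\eps_1}+\eps_2$ as soon as $D_h^{\eps_1}(\rho_1^{\otimes n}\|\sigma_1^{\otimes n}) \geq D_{\max}^{\eps_2,P}(\rho_2^{\otimes m}\|\sigma_2^{\otimes m})$. Inserting the two expansions with $m = Rn + a\sqrt n$, dividing by $\sqrt{n}$ and sending $n\to\infty$, the feasibility condition collapses to $a \leq \tfrac{1}{D_2}\bigl(\sqrt{V_1}\,\Phi^{-1}(\eps_1) - \sqrt{R V_2}\,\Phi^{-1}(1-\eps_2^{2})\bigr)$. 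Maximising the right-hand side over all error splittings $\eps_1,\eps_2 \in (0,1)$ with $\sqrt{\eps_1}+\eps_2 \leq \eps$, substituting $u := \sqrt{\eps_1}$ and $\eps_2 := \eps - u$, and using $\Phi^{-1}(1-x) = -\Phi^{-1}(x)$ together with $\sqrt{R V_2} = \sqrt{\nu\, V_1}$ where $\nu = \tfrac{D_1 V_2}{V_1 D_2}$ exactly as in the statement, one arrives at
\begin{align}
  a^{*} \;\geq\; \frac{\sqrt{V_1}}{D_2}\;\max_{0\leq u\leq\eps}\Bigl(\Phi^{-1}(u^{2}) + \sqrt{\nu}\,\Phi^{-1}\bigl((\eps-u)^{2}\bigr)\Bigr)\,.
\end{align}
The crucial step is then to recognise the optimisation on the right-hand side as the variational formula for the inverse $Z_{\nu}^{-1}(\eps)$ of the cumulative Rayleigh--normal distribution function of Kumagai and Hayashi~\cite{kumagai13}, with precisely the parameter $\nu$ appearing in the statement; this turns the bound into $a^{*} \geq \tfrac{\sqrt{V_1}}{D_2}\,Z_{\nu}^{-1}(\eps)$, the conjectured value.

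For the converse one would run the argument of Theorem~\ref{th:converse}: a channel with $\cE_n(\sigma_1^{\otimes n}) = \sigma_2^{\otimes m}$ and $P(\cE_n(\rho_1^{\otimes n}),\rho_2^{\otimes m}) \leq \eps$ forces, by data processing of the smooth max-divergence and the triangle inequality for $P$, that $D_{\max}^{\delta,P}(\rho_1^{\otimes n}\|\sigma_1^{\otimes n}) \geq D_{\max}^{\delta+\eps,P}(\rho_2^{\otimes m}\|\sigma_2^{\otimes m})$ for every $\delta\in(0,1-\eps)$; expanding to second order and minimising over $\delta$ gives an upper bound on $a^{*}$. The hard part --- and the reason this remains a conjecture --- is that this converse bound does \emph{not} match the achievability bound above: Lemma~\ref{lem:smooth} loses a square root in the error ($\sqrt{\eps_1}$ instead of $\eps_1$), and the converse chain loses comparably through its two triangle-inequality steps, so the two Gaussian optimisations live on different ranges and need not coincide. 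Closing the gap would require either a tight one-shot characterisation of approximate dichotomy transformations, simultaneously sharpening Lemma~\ref{lem:smooth} and the converse --- for instance via an $\eps$-smoothed version of the relative-Lorenz-curve / test-and-prepare conditions of~\cite{Buscemi2017} --- or a direct second-order argument in the spirit of~\cite{kumagai13}; the fact that such a direct argument already yields a matching converse in the commuting and pure-state special cases subsumed here is, in my view, the strongest evidence in favour of the conjecture.
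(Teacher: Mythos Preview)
The paper does not prove this statement: it is explicitly a \emph{Conjecture}, and the only justification offered is that the formula is known to hold in the commuting special cases treated in~\cite{kumagai13} and~\cite{chubb18}. You correctly recognise this and end by explaining why your sandwich does not close, so in that sense your proposal is consistent with the paper.

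That said, your achievability half overclaims. You assert as ``the crucial step'' that
\[
\max_{0\leq u\leq\eps}\Bigl(\Phi^{-1}(u^{2}) + \sqrt{\nu}\,\Phi^{-1}\bigl((\eps-u)^{2}\bigr)\Bigr) \;=\; Z_{\nu}^{-1}(\eps)\,,
\]
but you give no argument for this identity, and there is good reason to doubt it. The Rayleigh--normal function $Z_{\nu}$ in~\cite{kumagai13} is defined through an infinite-dimensional variational problem over monotone rearrangements, not a two-parameter split of a Gaussian quantile; and, more structurally, the channels supplied by Lemma~\ref{lem:smooth} are binary test-and-prepare maps that compress $\rho_1^{\otimes n}$ to a single classical bit before preparing the output. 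Such maps suffice for first-order rates (Theorem~\ref{th:main}) but are almost certainly too coarse to be second-order optimal --- already in the commuting case the optimal transformations in~\cite{kumagai13,chubb18} are genuine many-outcome stochastic maps, not binary tests. So even before the converse mismatch you flag, your achievability bound should be expected to fall strictly below $\tfrac{\sqrt{V_1}}{D_2}Z_{\nu}^{-1}(\eps)$, and the claimed equality is the real gap in your argument rather than a recognisable fact.
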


This form is of interest since it shows a resonance behavior when $\nu = 1$, where the contribution in the second-order term turns positive even for arbitrarily small~$\eps$. This means that there exist pairs of states that can be transformed into each other without loss due to finite size effects (up to second order). This effect has been observed both analytically and numerically in the commutative case~\cite{korzekwa18b}, and its applications to fully quantum resource theories remain to be explored.

The limit expression in~\eqref{eq:conj1} was shown to hold for the case where $\sigma_1$ and $\sigma_2$ are both proportional to the identity in the work of Kumagai and Hayashi~\cite{kumagai13}, and the above conjecture thus constitutes a natural fully quantum generalization of their result.
Building on that and an embedding technique from quantum thermodynamics, the equality was also shown for general $\sigma_1$ and $\sigma_2$ as long as $\rho_1$ and $\rho_2$ commute with $\sigma_1$ and $\sigma_2$, respectively~\cite{chubb18}.\footnote{To be more precise, the cited work~\cite{chubb18} only shows this for the case where $\sigma_1 = \sigma_2$, but the required extension follows after a close inspection of the proof.} 
The same special case can also be solved in the moderate error regime by adapting the results in~\cite{korzekwa18}.

Finally, note that since we are now concerned with higher order contribution that are a function of the error threshold $\eps$, the limit does in general depend on how exactly we measure the error. It is thus not obvious how an appropriate conjecture for the trace distance would look like, for example.

\paragraph{Acknowledgements.}
FB acknowledges partial support from the Japan Society for the Promotion of Science (JSPS) KAKENHI, Grant No.19H04066, and the program for FRIAS-Nagoya IAR Joint Project Group. DS acknowledges support from the Swiss National Science Foundation via the NCCR QSIT as well as project No.~200020\_165843.






\begin{thebibliography}{10}

\bibitem{Alberti1980}
P.~Alberti and A.~Uhlmann.
\newblock {\em ``A problem relating to positive linear maps on matrix
  algebras''}.
\newblock \href{http://dx.doi.org/10.1016/0034-4877(80)90083-x}{Reports on
  Mathematical Physics {\bf 18}(2):\,163--176} (1980).

\bibitem{alberti1982stochasticity}
P.~M. Alberti and A.~Uhlmann.
\newblock {\em Stochasticity and partial order}.
\newblock volume~9 of {\em Mathematics and Its Applications}, Deutscher Verlag
  der Wissenschaften (1982).

\bibitem{anshu19}
A.~Anshu, M.~Berta, R.~Jain, and M.~Tomamichel.
\newblock {\em ``{A minimax approach to one-shot entropy inequalities}''}.
\newblock Preprint, \href{http://arxiv.org/abs/1906.00333}{arXiv:\,1906.00333}
  (2019).

\bibitem{audenaert12}
K.~M.~R. Audenaert, M.~Mosonyi, and F.~Verstraete.
\newblock {\em ``{Quantum State Discrimination Bounds for Finite Sample
  Size}''}.
\newblock \href{http://dx.doi.org/10.1063/1.4768252}{Journal of Mathematical
  Physics {\bf 53}(12):\,122205} (2012).

\bibitem{blackwell1953}
D.~Blackwell.
\newblock {\em ``{Equivalent Comparisons of Experiments}''}.
\newblock \href{http://dx.doi.org/10.1214/aoms/1177729032}{The Annals of
  Mathematical Statistics {\bf 24}(2):\,265--272} (1953).

\bibitem{wehner13}
F.~G. S.~L. Brandao, M.~Horodecki, N.~H.~Y. Ng, J.~Oppenheim, and S.~Wehner.
\newblock {\em ``{The Second Laws of Quantum Thermodynamics}''}.
\newblock \href{http://dx.doi.org/10.1073/pnas.1411728112}{Proceedings of the
  National Academy of Sciences USA {\bf 112}(11):\,3275--3279} (2014).

\bibitem{buscemi-2015}
F.~Buscemi.
\newblock {\em ``{Fully quantum second-law--like statements from the theory of
  statistical comparisons}''}.
\newblock Preprint,
  \href{http://arxiv.org/abs/1505.00535v1}{arXiv:\,1505.00535v1} .

\bibitem{Buscemi2012}
F.~Buscemi.
\newblock {\em ``Comparison of Quantum Statistical Models: Equivalent
  Conditions for Sufficiency''}.
\newblock \href{http://dx.doi.org/10.1007/s00220-012-1421-3}{Communications in
  Mathematical Physics {\bf 310}(3):\,625--647} (2012).

\bibitem{datta09}
F.~Buscemi and N.~Datta.
\newblock {\em ``{The Quantum Capacity of Channels With Arbitrarily Correlated
  Noise}''}.
\newblock \href{http://dx.doi.org/10.1109/TIT.2009.2039166}{IEEE Transactions
  on Information Theory {\bf 56}(3):\,1447--1460} (2010).

\bibitem{Buscemi2017}
F.~Buscemi and G.~Gour.
\newblock {\em ``{Quantum relative Lorenz curves}''}.
\newblock \href{http://dx.doi.org/10.1103/physreva.95.012110}{Physical Review A
  {\bf 95}(1)} (2017).

\bibitem{chefles00}
A.~Chefles.
\newblock {\em ``{Deterministic Quantum State Transformations}''}.
\newblock \href{http://dx.doi.org/10.1016/S0375-9601(00)00291-7}{Physics
  Letters A {\bf 270}(1-2):\,14--19} (2000).

\bibitem{chefles04}
A.~Chefles, R.~Jozsa, and A.~Winter.
\newblock {\em ``{On the Existence of Physical Transformations between Sets of
  Quantum States}''}.
\newblock \href{http://dx.doi.org/10.1142/S0219749904000031}{International
  Journal of Quantum Information {\bf 02}(01):\,11--21} (2004).

\bibitem{cheng17}
H.-C. Cheng and M.-H. Hsieh.
\newblock {\em ``{Moderate deviation analysis for classical-quantum channels
  and quantum hypothesis testing}''}.
\newblock \href{http://dx.doi.org/10.1109/TIT.2017.2781254}{IEEE Transactions
  on Information Theory {\bf 64}(2):\,1385--1403} (2018).

\bibitem{Chitambar2018}
E.~Chitambar.
\newblock {\em ``{Dephasing-covariant operations enable asymptotic
  reversibility of quantum resources}''}.
\newblock \href{http://dx.doi.org/10.1103/physreva.97.050301}{Physical Review A
  {\bf 97}(5)} (2018).

\bibitem{Chitambar2016}
E.~Chitambar and G.~Gour.
\newblock {\em ``{Critical Examination of Incoherent Operations and a
  Physically Consistent Resource Theory of Quantum Coherence}''}.
\newblock \href{http://dx.doi.org/10.1103/physrevlett.117.030401}{Physical
  Review Letters {\bf 117}(3)} (2016).

\bibitem{Chitambar2019}
E.~Chitambar and G.~Gour.
\newblock {\em ``{Quantum resource theories}''}.
\newblock \href{http://dx.doi.org/10.1103/revmodphys.91.025001}{Reviews of
  Modern Physics {\bf 91}(2)} (2019).

\bibitem{chubb17}
C.~T. Chubb, V.~Y.~F. Tan, and M.~Tomamichel.
\newblock {\em ``{Moderate Deviation Analysis for Classical Communication over
  Quantum Channels}''}.
\newblock \href{http://dx.doi.org/10.1007/s00220-017-2971-1}{Communications in
  Mathematical Physics {\bf 355}(3):\,1283--1315} (2017).

\bibitem{chubb18}
C.~T. Chubb, M.~Tomamichel, and K.~Korzekwa.
\newblock {\em ``{Beyond the thermodynamic limit: finite-size corrections to
  state interconversion rates}''}.
\newblock \href{http://dx.doi.org/10.22331/q-2018-11-27-108}{Quantum {\bf
  2}:\,108} (2018).

\bibitem{korzekwa18}
C.~T. Chubb, M.~Tomamichel, and K.~Korzekwa.
\newblock {\em ``{Moderate deviation analysis of majorization-based resource
  interconversion}''}.
\newblock \href{http://dx.doi.org/10.1103/PhysRevA.99.032332}{Physical Review A
  {\bf 99}(3):\,032332} (2019).

\bibitem{cohen1998comparisons}
J.~E. Cohen, J.~H.~B. Kempermann, and G.~Zbaganu.
\newblock {\em {Comparisons of Stochastic Matrices with Applications in
  Information Theory, Statistics, Economics and Population}}.
\newblock Birkh\"auser (1998).

\bibitem{DAHL199953}
G.~Dahl.
\newblock {\em ``{Matrix majorization}''}.
\newblock
  \href{http://dx.doi.org/https://doi.org/10.1016/S0024-3795(98)10175-1}{Linear
  Algebra and its Applications {\bf 288}:\,53 -- 73} (1999).

\bibitem{datta08}
N.~Datta.
\newblock {\em ``{Min- and Max- Relative Entropies and a New Entanglement
  Monotone}''}.
\newblock \href{http://dx.doi.org/10.1109/TIT.2009.2018325}{IEEE Transactions
  on Information Theory {\bf 55}(6):\,2816--2826} (2009).

\bibitem{Gour2018}
G.~Gour, D.~Jennings, F.~Buscemi, R.~Duan, and I.~Marvian.
\newblock {\em ``{Quantum majorization and a complete set of entropic
  conditions for quantum thermodynamics}''}.
\newblock \href{http://dx.doi.org/10.1038/s41467-018-06261-7}{Nature
  Communications {\bf 9}(1)} (2018).

\bibitem{inequalities34}
G.~H. Hardy, J.~E. Littlewood, and G.~P{\'o}lya.
\newblock {\em {Inequalities}}.
\newblock Cambridge University Press (1934).

\bibitem{hiai91}
F.~Hiai and D.~Petz.
\newblock {\em ``{The Proper Formula for Relative Entropy and its Asymptotics
  in Quantum Probability}''}.
\newblock \href{http://dx.doi.org/10.1007/BF02100287}{Communications in
  Mathematical Physics {\bf 143}(1):\,99--114} (1991).

\bibitem{Horodecki2013}
M.~Horodecki and J.~Oppenheim.
\newblock {\em ``{Fundamental limitations for quantum and nanoscale
  thermodynamics}''}.
\newblock \href{http://dx.doi.org/10.1038/ncomms3059}{Nature Communications
  {\bf 4}(1)} (2013).

\bibitem{Jencova2012}
A.~Jen{\v{c}}ov{\'{a}}.
\newblock {\em ``Comparison of Quantum Binary Experiments''}.
\newblock \href{http://dx.doi.org/10.1016/s0034-4877(12)60043-3}{Reports on
  Mathematical Physics {\bf 70}(2):\,237--249} (2012).

\bibitem{jenvcova2016comparison}
A.~Jen{\v{c}}ov{\'a}.
\newblock {\em ``Comparison of quantum channels and statistical experiments''}.
\newblock In {\em Proc. IEEE ISIT 2016}, pages 2249--2253, (2016).

\bibitem{korner-marton-75}
J.~K\"orner and K.~Marton.
\newblock {\em ``{Comparison of two noisy channels}''}.
\newblock Colloquia Mathematica Societatis Janos Bolyai, Topics in Information
  Theory {\bf 16}:\,411--424, (1977).

\bibitem{korzekwa18b}
K.~Korzekwa, C.~T. Chubb, and M.~Tomamichel.
\newblock {\em ``{Avoiding Irreversibility: Engineering Resonant Conversions of
  Quantum Resources}''}.
\newblock \href{http://dx.doi.org/10.1103/PhysRevLett.122.110403}{Physical
  Review Letters {\bf 122}(11):\,110403} (2019).

\bibitem{kumagai13}
W.~Kumagai and M.~Hayashi.
\newblock {\em ``{Second Order Asymptotics of Optimal Approximate Conversion
  for Probability Distributions and Entangled States and Its Application to
  LOCC Cloning}''}.
\newblock Preprint, \href{http://arxiv.org/abs/1306.4166}{arXiv:\,1306.4166}
  (2013).

\bibitem{lecam1964}
L.~{Le Cam}.
\newblock {\em ``{Sufficiency and Approximate Sufficiency}''}.
\newblock \href{http://dx.doi.org/10.1214/aoms/1177700372}{The Annals of
  Mathematical Statistics {\bf 35}(4):\,1419--1455} (1964).

\bibitem{li12}
K.~Li.
\newblock {\em ``{Second-Order Asymptotics for Quantum Hypothesis Testing}''}.
\newblock \href{http://dx.doi.org/10.1214/13-AOS1185}{Annals of Statistics {\bf
  42}(1):\,171--189} (2014).

\bibitem{marshall11}
A.~W. Marshall, I.~Olkin, and B.~C. Arnold.
\newblock {\em {Inequalities: Theory of Majorization and Its Applications}}.
\newblock \href{http://dx.doi.org/10.1007/978-0-387-68276-1}{Springer} (2011).

\bibitem{Marvian2016}
I.~Marvian and R.~W. Spekkens.
\newblock {\em ``{How to quantify coherence: Distinguishing speakable and
  unspeakable notions}''}.
\newblock \href{http://dx.doi.org/10.1103/physreva.94.052324}{Physical Review A
  {\bf 94}(5)} (2016).

\bibitem{matsumoto2010randomization}
K.~Matsumoto.
\newblock {\em ``A quantum version of randomization criterion''}.
\newblock Preprint, \href{http://arxiv.org/abs/1012.2650}{arXiv:\,1012.2650}
  (2010).

\bibitem{matsumoto10}
K.~Matsumoto.
\newblock {\em ``{Reverse Test and Characterization of Quantum Relative
  Entropy}''}.
\newblock Preprint, \href{http://arxiv.org/abs/1010.1030}{arXiv:\,1010.1030}
  (2010).

\bibitem{matsumoto2014example}
K.~Matsumoto.
\newblock {\em ``An example of a quantum statistical model which cannot be
  mapped to a less informative one by any trace preserving positive map''}.
\newblock Preprint, \href{http://arxiv.org/abs/1409.5658}{arXiv:\,1409.5658}
  (2014).

\bibitem{matsumoto2014-cq-transform}
K.~Matsumoto.
\newblock {\em ``On the condition of conversion of classical probability
  distribution families into quantum families''}.
\newblock Preprint, \href{http://arxiv.org/abs/1412.3680}{arXiv:\,1412.3680}
  (2014).

\bibitem{lennert13}
M.~M{\"{u}}ller-Lennert, F.~Dupuis, O.~Szehr, S.~Fehr, and M.~Tomamichel.
\newblock {\em ``{On Quantum R{\'{e}}nyi Entropies: A New Generalization and
  Some Properties}''}.
\newblock \href{http://dx.doi.org/10.1063/1.4838856}{Journal of Mathematical
  Physics {\bf 54}(12):\,122203} (2013).

\bibitem{Nielsen1999}
M.~A. Nielsen.
\newblock {\em ``{Conditions for a Class of Entanglement Transformations}''}.
\newblock \href{http://dx.doi.org/10.1103/physrevlett.83.436}{Physical Review
  Letters {\bf 83}(2):\,436--439} (1999).

\bibitem{ogawa00}
T.~Ogawa and H.~Nagaoka.
\newblock {\em ``{Strong Converse and Stein's Lemma in Quantum Hypothesis
  Testing}''}.
\newblock \href{http://dx.doi.org/10.1109/18.887855}{IEEE Transactions on
  Information Theory {\bf 46}(7):\,2428--2433} (2000).

\bibitem{petz86}
D.~Petz.
\newblock {\em ``{Quasi-entropies for Finite Quantum Systems}''}.
\newblock \href{http://dx.doi.org/10.1016/0034-4877(86)90067-4}{Reports on
  Mathematical Physics {\bf 23}(1):\,57--65} (1986).

\bibitem{Reeb2011}
D.~Reeb, M.~J. Kastoryano, and M.~M. Wolf.
\newblock {\em ``{Hilbert's projective metric in quantum information
  theory}''}.
\newblock \href{http://dx.doi.org/10.1063/1.3615729}{Journal of Mathematical
  Physics {\bf 52}(8):\,082201} (2011).

\bibitem{Regula2018}
B.~Regula, K.~Fang, X.~Wang, and G.~Adesso.
\newblock {\em ``{One-Shot Coherence Distillation}''}.
\newblock \href{http://dx.doi.org/10.1103/physrevlett.121.010401}{Physical
  Review Letters {\bf 121}(1)} (2018).

\bibitem{rhoDIO2019}
B.~Regula, V.~Narasimhachar, F.~Buscemi, and M.~Gu.
\newblock {\em ``Coherence manipulation with dephasing-covariant operations''}.
\newblock Preprint, \href{http://arxiv.org/abs/1907.08606}{arXiv:\,1907.08606}
  (2019).

\bibitem{Renes2016}
J.~M. Renes.
\newblock {\em ``{Relative submajorization and its use in quantum resource
  theories}''}.
\newblock \href{http://dx.doi.org/10.1063/1.4972295}{Journal of Mathematical
  Physics {\bf 57}(12):\,122202} (2016).

\bibitem{renner05}
R.~Renner.
\newblock {\em {Security of Quantum Key Distribution}}.
\newblock PhD thesis, ETH Zurich, (2005).
\newblock Available at
  \href{http://arxiv.org/abs/quant-ph/0512258}{arXiv:\,quant-ph/0512258}.

\bibitem{sagawa19}
T.~Sagawa, P.~Faist, K.~Kato, K.~Matsumoto, H.~Nagaoka, and F.~G. S.~L.
  Brandao.
\newblock {\em ``Asymptotic Reversibility of Thermal Operations for Interacting
  Quantum Spin Systems via Generalized Quantum Stein's Lemma''}.
\newblock Preprint, \href{http://arxiv.org/abs/1907.05650}{arXiv:\,1907.05650}
  .

\bibitem{shannon1958note}
C.~E. Shannon.
\newblock {\em ``{A note on a partial ordering for communication channels}''}.
\newblock \href{http://dx.doi.org/10.1016/S0019-9958(58)90239-0}{Information
  and control {\bf 1}(4):\,390--397} (1958).

\bibitem{siddhu16}
V.~Siddhu and R.~B. Griffiths.
\newblock {\em ``{Degradable Quantum Channels using Pure-State to
  Product-of-Pure-State Isometries}''}.
\newblock \href{http://dx.doi.org/10.1103/PhysRevA.94.052331}{Physical Review A
  {\bf 94}(5):\,052331} (2016).

\bibitem{mybook}
M.~Tomamichel.
\newblock {\em {Quantum Information Processing with Finite Resources ---
  Mathematical Foundations}}.
\newblock volume~5 of {\em SpringerBriefs in Mathematical Physics},
  \href{http://dx.doi.org/10.1007/978-3-319-21891-5}{Springer International
  Publishing} (2016).

\bibitem{tomamichel08}
M.~Tomamichel, R.~Colbeck, and R.~Renner.
\newblock {\em ``{A Fully Quantum Asymptotic Equipartition Property}''}.
\newblock \href{http://dx.doi.org/10.1109/TIT.2009.2032797}{IEEE Transactions
  on Information Theory {\bf 55}(12):\,5840--5847} (2009).

\bibitem{tomamichel12}
M.~Tomamichel and M.~Hayashi.
\newblock {\em ``{A Hierarchy of Information Quantities for Finite Block Length
  Analysis of Quantum Tasks}''}.
\newblock \href{http://dx.doi.org/10.1109/TIT.2013.2276628}{IEEE Transactions
  on Information Theory {\bf 59}(11):\,7693--7710} (2013).

\bibitem{torgersen1991comparison}
E.~Torgersen.
\newblock {\em Comparison of statistical experiments}.
\newblock volume~36 of {\em Encyclopedia of Mathematics and its Applications},
  \href{http://dx.doi.org/10.1017/CBO9780511666353}{Cambridge University Press}
  (1991).

\bibitem{Torgersen1970}
E.~N. Torgersen.
\newblock {\em ``Comparison of experiments when the parameter space is
  finite''}.
\newblock \href{http://dx.doi.org/10.1007/BF00534598}{Zeitschrift f{\"u}r
  Wahrscheinlichkeitstheorie und Verwandte Gebiete {\bf 16}(3):\,219--249}
  (1970).

\bibitem{wang19}
X.~Wang and M.~M. Wilde.
\newblock {\em ``{Resource theory of asymmetric distinguishability}''}.
\newblock Preprint, \href{http://arxiv.org/abs/1905.11629}{arXiv:\,1905.11629}
  (2019).

\bibitem{wilde13}
M.~M. Wilde, A.~Winter, and D.~Yang.
\newblock {\em ``{Strong Converse for the Classical Capacity of
  Entanglement-Breaking and Hadamard Channels via a Sandwiched R{\'{e}}nyi
  Relative Entropy}''}.
\newblock \href{http://dx.doi.org/10.1007/s00220-014-2122-x}{Communications in
  Mathematical Physics {\bf 331}(2):\,593--622} (2014).

\end{thebibliography}

\end{document}